\date{\today}
\newcommand{\N}{\mathbb{N}}
\newcommand{\Z}{\mathbb{Z}}
\newcommand{\A}{\mathcal{A}}
\newcommand{\diam}{\mathop{\rm diam}}
\newcommand{\ran}{\mathop{\rm ran}}
\newcommand{\id}{{\rm id}}
\newcommand{\be}{\begin{equation}}
\newcommand{\ee}{\end{equation}}
\newcommand{\bea}{\begin{eqnarray}}
\newcommand{\eea}{\end{eqnarray}}
\newcommand{\beann}{\begin{eqnarray*}}
\newcommand{\eeann}{\end{eqnarray*}}
\newtheorem{theorem}{Theorem}[section]
\newtheorem{proposition}[theorem]{Proposition}
\newtheorem{corollary}[theorem]{Corollary}
\newtheorem{lemma}[theorem]{Lemma}
 \numberwithin{equation}{section}
 \newtheorem{assumption}[theorem]{Assumption}
\renewcommand{\epsilon}{\varepsilon}
\newcommand{\norm}[1]{\|{#1}\|}
\newcommand{\beq}{\begin{equation}}
\newcommand{\eeq}{\end{equation}}
\newcommand{\R}{\mathbb{R}}
\renewcommand{\d}{\mathrm{d}}
\newcounter{numcount}
\newcommand{\labelnummer}{(\roman{numcount})}%
\providecommand{\showkeyslabelformat}[1]{\relax}        
\let\mysaveformat\showkeyslabelformat                   %
\def\myformat#1{\raisebox{-1.5ex}{\mysaveformat{#1}}}   %
\newenvironment{nummer}%
  {\let\curlabelspeicher\@currentlabel%
    \begin{list}{\textup{\labelnummer}}%
      {\usecounter{numcount}\leftmargin0pt%
        \topsep0.5ex\partopsep2ex\parsep0pt\itemsep0ex\@plus1\p@%
        \labelwidth2.5em\itemindent3.5em\labelsep1em%
      }%
    \let\saveitem\item%
    \def\item{\saveitem%
      \def\@currentlabel{\curlabelspeicher\kern.1em\labelnummer}}%
    \let\savelabel\label%
    \def\label##1{{\ifnum\thenumcount=1\let\showkeyslabelformat\myformat\fi\savelabel{##1}}%
										{\def\@currentlabel{\labelnummer}%
									 	\let\showkeyslabelformat\@gobble
									 	\savelabel{##1item}%
										}%
	   							}%
  }{\end{list}}%
\let\curlabelspeicher\@currentlabel%
    \let\saveitem\item%
    \def\item{\saveitem%
      \def\@currentlabel{\curlabelspeicher\kern.1em\labelnummer}}%
    \let\savelabel\label%
    \def\label##1{{\ifnum\thenumcount=1\let\showkeyslabelformat\myformat\fi\savelabel{##1}}%
										{\def\@currentlabel{\labelnummer}%
									 	\let\showkeyslabelformat\@gobble
									 	\savelabel{##1item}%
										}%
    							}%
\let\OldItem\item
\newcommand{\MyItem}[2][]{}%
\newtheorem{myremarks}[theorem]{Remarks}
\newcommand{\m}[1]{\mathbb{#1}}
\newcommand{\1}{\mathbbm{1}}
\newcommand{\rr}{\m{R}}
\newcommand{\nn}{\m{N}}
\newcommand{\cc}{\m{C}}
\newcommand{\zz}{\m{Z}}
\newcommand{\la}{\lambda}
\newcommand{\sig}{\sigma}
\newcommand{\ep}{\varepsilon}
\newcommand{\set}[1]{\left\{ #1 \right\} }
\newcommand{\ip}[1]{\langle #1 \rangle}
\newcommand{\mc}[1]{\mathcal{#1}}
\newcommand{\Aloc}{\A_{\textnormal{loc}}}
\newcommand{\Aut}{\textnormal{Aut}}
\newcommand{\oV}{\overline{V}_x(\vec{\la})}
\title[Lieb-Robinson bound with impurities]{A Lieb-Robinson bound for quantum spin chains with strong on-site impurities}
\author{Martin Gebert, Alvin Moon and Bruno Nachtergaele}
\address{Mathematisches Institut,
  Ludwig-Maximilians-Universit\"at M\"unchen,
  Theresienstra\ss{e} 39,
  80333 M\"unchen, Germany \newline
  \textnormal{\textit{Email}: \href{mailto:gebert@math.lmu.de}{gebert@math.lmu.de}}}
\address{Centre for the Mathematics of Quantum Theory. University of Copenhagen. Copenhagen, 2100, Denmark.  \newline \textnormal{\textit{Email}: \href{mailto:am@math.ku.dk}{am@math.ku.dk}}}
\address{Department of Mathematics and Center for Quantum Mathematics and Physics. University of California, Davis. Davis, 95616, USA \newline \textnormal{\textit{Email}: \href{mailto:bxn@math.ucdavis.edu}{bxn@math.ucdavis.edu} }}
\address{ }
\begin{document}

\begin{abstract}
We consider a quantum spin chain with nearest neighbor interactions and sparsely distributed on-site impurities. We prove commutator bounds for its Heisenberg dynamics which incorporate the coupling strengths of the impurities. The impurities are assumed to satisfy a minimum spacing, and each impurity has a non-degenerate spectrum. Our results are proven in a broadly applicable setting, both in finite volume and in thermodynamic limit. We apply our results to improve Lieb-Robinson bounds for the Heisenberg spin chain with a random, sparse transverse field drawn from a heavy-tailed distribution. 
\end{abstract}

\maketitle

\section{Introduction}
Since the first demonstration of a finite group velocity for quantum spin systems in \cite{Lieb1972}, Lieb-Robinson bounds have played an important role in proving fundamental results in condensed matter theory and quantum information theory \cite{PhysRevLett.97.050401,nachtergaele:2006,PhysRevLett.102.240603}. The question of whether they can be improved in systems with distinguishing features, such as disorder \cite{PhysRevA.80.052319, PhysRevLett.99.167201, Hamza2012, Gebert2016, Elgart2018}, anomalous transport \cite{PhysRevLett.113.127202} or assumed rate of decay of interaction \cite{PhysRevX.9.031006, PhysRevA.101.022333,PRXQuantum.1.010303}, has received considerable attention in recent years. There is also interest in comparing observed velocities in experiments with the estimates one can prove \cite{them:2014}. An overview of proofs and applications of Lieb-Robinson bounds can be found in Section 3 of \cite{NSY}. 

In this paper we focus on spin chains. Propagation estimates of Lieb-Robinson type for classes of Hamiltonians are typically given in terms of a measure of the strength of the interactions, usually by using a norm. One naturally obtains such estimates that do not depend on the presence and magnitude of terms in the Hamiltonian supported on single sites since, by themselves, such terms do not generate propagation through the system. Here, we consider quantum spin chains with nearest neighbor interactions for which we are given such a propagation estimate that does not depend on single-site terms. We show that under certain conditions, taking into account single-site terms can lead to a sharper estimate. This improvement is manifested by a reduction of the pre-factor (amplitude of the propagation), and not in the Lieb-Robinson velocity.
More specifically, we can exploit large on-site terms (such a magnetic fields) supported on a subset of sites for which we assume a minimum spacing between sites
and a non-degeneracy condition on the eigenvalues of these single-site terms. As a consequence of our main result, Theorem \ref{theorem:mainresult}, we show that with our set-up, for time $t$ and observables $A$ and $B$, there exists a constant $C(A,B,t)$ such that
	\begin{equation}\label{eq:heuristic}
		\begin{split}
\norm{ [A(t), B]} \leq  \bigg{(} \frac{C(A,B,t)}{\la} \bigg{)}^N \norm{A}\norm{B} \exp( v |t| - \mu d_{A,B})
		\end{split}
	\end{equation}
where $N$ is the number of impurities which are well-separated from and between the supports of $A$ and $B$, $v,\mu>0$ are parameters (see Section \ref{sec:prelim}), $d_{A,B}$ is the distance between the supports of $A$ and $B$ and $\la>0$ is the minimum impurity strength. The precise statement is found in Corollary \ref{corollary:mainresult}. The quantity $C(A,B,t)$ can be determined explicitly and is independent of the system size and, hence, the estimate in (\ref{eq:heuristic}) also holds in the thermodynamic limit (see Section \ref{ssec:thermo}). Our result is non-trivial since the velocity term of a Lieb-Robinson bound generally diverges with the strength of the interaction. However, with our methods we do not see an effect of the impurities on the Lieb-Robinson velocity itself.

In Section \ref{sec:heavy-tail}, we apply Theorem \ref{theorem:mainresult} to the Heisenberg model in the case when a sparse transverse field is coupled to the nearest neighbor interaction with i.i.d. couplings drawn from a heavy-tailed distribution. We show that with high probability, $\norm{[A(t), B]}$ is much smaller than one would expect
from the standard estimates, c.f. the commutator bound from Theorem \ref{thm:apLR}.

\section{Preliminaries and Notation}\label{sec:prelim}
We consider the 1D lattice $\zz$ and associate a copy of $\cc^D$, $D\geq 2$, to each lattice site. We equip $\Z$ with the natural distance $d(x,y) = |x-y|$ and define $d( X, Y) = \inf _{x\in X, y\in Y} |x-y|$ and $d(x, Y) = d( \set{x}, Y)$ for $x\in \zz$ and $X,Y\subset \zz$. For any finite $X \subset \zz$, we define
\beq\label{alg}
	\A_X =\bigotimes_{x\in X} M_{D}(\cc) 
\eeq
where $M_{D}(\cc)$ is the set of $D \times D$ matrices. When $X \subset Y$ and $|Y|<\infty$, we identify $\A_X$ as a subalgebra of $\A_Y$ by the map $A \mapsto A\otimes \1_{Y\setminus X}$, where $\1_{Y \setminus X}$ is the unit of $\A_{Y \setminus X}$. The algebra of local observables, the \textit{local algebra} for short, is given by
	\begin{equation}
		\begin{split}
\Aloc = \bigcup _{\substack{ X \subset \zz \\ |X|<\infty}} \A_X,
		\end{split}
	\end{equation}
and we will refer to the  operator norm completion of $\Aloc$, denoted by $\A$ as  the \textit{quasi-local algebra}. 

A mapping $\eta: \set{ X \subset \zz: |X|<\infty} \to \A$ is an \textit{interaction} if $\eta(X) =\eta(X)^* \in \A_X$ for all $X$. $\eta$ is a \textit{nearest neighbor} interaction when $\eta(X) \not = 0$ only if $X = \set{x,x+1}$ for some $x\in \zz$. For a nearest neighbor interaction $\eta$ we use the notation $\eta_{x,x+1} = \eta( \set{x,x+1})$ and $\|\eta\| = \sup_{x\in\Z} \|\eta_{x,x+1}\|$.

Let $\Phi: \set{ X \subset \zz : |X| < \infty} \to \A$ be a nearest neighbor interaction with $\|\Phi\|  < \infty. $
For $L\in\N$ we define 
\beq\label{def:H}
	H_L(\Phi) \equiv H_L =   \sum _{x = -L}^{L-1} \Phi_{x,x+1}.
\eeq

We are interested in on-site perturbations of the Hamiltonian $H_L$, which we will refer to as impurities.
To define these let $\emptyset\neq\mc{F}\subset \zz$ and define the \textit{minimal spacing} of $\mc{F}$ as 
\beq
\sig_\mc{F} = \min \set{|x-y|: x,y\in \mc{F}, x\not  = y }.
\eeq
Later on, we have to assume $\sig_\mc{F}$ is sufficiently large. Now for each $x\in \mc{F}$, let $V_{x} = V_{x}^* \in \A_{\set{x}}$ be an operator of the form
	\begin{equation}\label{eq:onsite-impurity}
		\begin{split}
V_{x} = \sum _{j=1}^{D} \gamma^{(x)}_j P_j^{(x)}
		\end{split}
	\end{equation} 
where $P_j^{(x)}$ are the eigenprojectors of $V_x$, and assume the eigenvalues $\gamma^{(x)}_j$ are distinct at each site, i.e. for all $x\in \mc{F}$, $\gamma^{(x)}_i\neq \gamma^{(x)}_j$ for all $i\neq j$. 
We now consider perturbations of the original Hamiltonian $H_L$ of the following form:
	\begin{equation}\label{eq:Ham+lam}
		\begin{split}
H_L (\vec{\la}) = H_L + \sum_{\substack{x \in \mc{F} \cap [-L,L] }} \la_ xV_x,
		\end{split}
	\end{equation} 
where $\lambda_x \in \R\setminus\{0\}$ for all $x\in\mathcal F$ and $\vec{\la} =  \big(\la_x\big)_{x\in \mc{F}\cap [-L,L]}$ is the vector consisting of the coupling constants. 
We use the short-hand notation for the perturbation 
\beq
V_L(\vec{\la}) = \sum_{\substack{x \in \mc{F} \cap [-L,L] }} \la_ xV_x.
\eeq

For $L\in \nn$, denote $\A_L = \A_{[-L,L]}$. We are interested in the Heisenberg evolution of an observable $A\in \A_L $, which for a Hamiltonian $H = H^*\in \A_{L}$ is
defined by 
\beq\label{eq:heisenberg}
\tau_t^H(A) = e^{i t H} A e^{-i t H}, 
\eeq
where $t\in\R$. Lastly, we fix some notation we frequently use in the following. 
For $X\subset[-L,L]$, we define an enlarged version of $X$ by $X(n) = \set{ x \in {[-L,L]} : d(x,X) \leq n}$. For an observable $A\in\A_{L}$ we denote by $S_A$ the support of $A$, which we take to be the minimal length interval $[x,y]$ such that $A \in \A_{[x,y]}$. 

Since $\Phi$ is a nearest neighbor interaction, the dynamics generated by $\Phi$ satisfy a Lieb-Robinson bound. We parametrize the bound by a parameter $\mu>0$ which is the rate of spatial decay, and the strength $\norm{\Phi}$ of the nearest neighbor interaction. The relevant statement of this commutator bound, Theorem \ref{thm:apLR} below, is implied by Corollary 2.2 of \cite{NRSS}. 

\begin{theorem}\label{thm:apLR}
Suppose $\Phi$ is a nearest neighbor interaction with $\norm{\Phi}<\infty$. For all $\mu>0$, there exist $C_0, v> 0$ depending on $\mu$ and $\norm{\Phi}$ such that for any operator of the form
	\begin{equation}
		\begin{split}
\Psi_L = \sum _{x=-L}^L \Psi_x , ~~ \Psi_x \in \A_{\set{x}},
		\end{split}
	\end{equation}
if $A, B \in \A_L$, then for all $t\in \rr$:
	\begin{equation}\label{eq:eq_LR}
		\begin{split}
\norm{ [\tau^{H_L(\Phi) + \Psi_L }_t(A), B]} \leq C_0 \norm{A} \norm{B} (e^{ v |t|}-1) e^{-\mu d(S_A, S_B)}.
		\end{split}
	\end{equation}
\end{theorem}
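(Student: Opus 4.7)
The plan is to reduce the statement to the Lieb-Robinson bound for the nearest-neighbor interaction $\Phi$ alone by passing to the interaction picture with respect to the on-site operator $\Psi_L$. The crucial observation is that since each $\Psi_x$ is supported at a single site, the unitary $e^{it\Psi_L}$ factorizes as a tensor product of on-site unitaries, so conjugation by it preserves both the support and the operator norm of every local observable.

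Concretely, I would set $W(t) = e^{it\Psi_L}\, e^{-it(H_L(\Phi)+\Psi_L)}$ and verify by direct differentiation that $W(t)$ is the unitary propagator generated by the time-dependent Hamiltonian
\begin{equation*}
\tilde H_L(t) = \sum_{x=-L}^{L-1} \tilde\Phi_{x,x+1}(t), \qquad \tilde\Phi_{x,x+1}(t) = e^{it\Psi_L}\Phi_{x,x+1}e^{-it\Psi_L} \in \A_{\{x,x+1\}},
\end{equation*}
which is again a nearest-neighbor interaction with $\|\tilde\Phi_{x,x+1}(t)\| = \|\Phi_{x,x+1}\|$ for all $t$. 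The identity $e^{-it(H_L(\Phi)+\Psi_L)} = e^{-it\Psi_L}\, W(t)$ then gives the factorization
\begin{equation*}
\tau_t^{H_L(\Phi)+\Psi_L}(A) = W(t)^*\, \tau_t^{\Psi_L}(A)\, W(t),
\end{equation*}
and because $\Psi_L$ is on-site, $\tau_t^{\Psi_L}(A)$ has support $S_A$ and norm $\|A\|$.

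At this point the commutator in \eqref{eq:eq_LR} equals the commutator of $B$ with the Heisenberg evolution of $\tau_t^{\Psi_L}(A)$ under the time-dependent nearest-neighbor Hamiltonian $\tilde H_L(t)$, and applying (the time-dependent version of) Corollary 2.2 of \cite{NRSS} to this evolution delivers the bound with constants $C_0$ and $v$ depending only on $\mu$ and $\sup_t \|\tilde\Phi(t)\| = \|\Phi\|$. There is no genuine obstacle here: the content of the theorem is precisely the familiar observation that purely on-site terms, no matter how large, can be gauged away by a local unitary and therefore contribute neither to the Lieb-Robinson velocity nor to the pre-factor. The only point worth checking is that the Dyson-series proof of the cited corollary applies without change in the time-dependent setting, which is standard.
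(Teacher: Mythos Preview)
Your argument is correct. The paper, however, does not give a proof of this theorem at all: it simply records that the statement is implied by Corollary~2.2 of \cite{NRSS} and then extracts the explicit constants $C_0$ and $v$ from that reference. The underlying reason no separate argument is needed is that in the standard iteration proof of that corollary, only interaction terms whose support has diameter at least one enter the recursion; purely on-site terms $\Psi_x$ therefore contribute nothing to either the velocity $v$ or the prefactor $C_0$, and the bound with constants depending only on $\norm{\Phi}$ falls out directly.

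Your route---gauging away $\Psi_L$ by the interaction picture and then applying the (time-dependent) Lieb--Robinson bound to the conjugated nearest-neighbor interaction $\tilde\Phi(t)$---is a valid and more explicit alternative. It makes the independence from on-site terms manifest via a structural argument rather than by inspecting the internals of the cited proof, at the minor cost of needing the time-dependent version of \cite{NRSS}. One small wording point: at fixed $t$ you are applying the bound to the operator $\tau_t^{\Psi_L}(A)$, which depends on $t$ but is, for each $t$, a local observable with support $S_A$ and norm $\norm{A}$; since the Lieb--Robinson estimate is uniform over such observables this is unproblematic, but it is not literally ``the Heisenberg evolution of $\tau_t^{\Psi_L}(A)$'' in the usual sense.
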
	

We note that since our definition of support uses a single interval, the bound in (\ref{eq:eq_LR}) does not depend on the support sizes of $A$ and $B$.

Assuming $C_0\geq 1$ will simplify the form of the constants in Theorem \ref{theorem:mainresult} without loss of generality. And the proof of Corollary 2.2 in \cite{NRSS} shows that the constants $C_0$ and $v$ in Theorem \ref{thm:apLR} can be taken as
	\begin{equation}\label{eq:diverge-const}
		\begin{split}
		C_0  = \frac{10c_\mu}{K_\mu} \text{ and } v = 8e^\mu K_\mu \norm{\Phi}
		\end{split}
	\end{equation}
where	
	\begin{equation}
		\begin{split}
c_\mu = \sum _{x\in \zz} e^{-\mu x} \frac{1}{(1+|x|)^2} \text{ and } K_\mu = \sup _{x,y \in \zz} \sum _{z\in \zz} \frac{e^{ - \mu  ( |x-z| + |y-z| - |x-y|)}(1+ |x-y|)^2}{(1+ |x-z|)^2 (1+ |z-y|)^2}. 
		\end{split}
	\end{equation}

\medskip

\begin{assumption}
In the following, we will always assume $\Phi$ is a nearest neighbor interaction with $\norm{\Phi}<\infty$. For any $\mu>0$, we will take $C_0\geq 1$ and $v>0$ as defined in \textnormal{(\ref{eq:diverge-const})}. 
\end{assumption}

\section{Main results}

The main result of this work is the following theorem.

\begin{theorem}\label{theorem:mainresult}
For any $\mu>0$, if $\sig_\mc{F} > \max \set{1/\mu, 2}$, then there exists a constant $C>0$ such that for all $L\in\N$ and $A,B \in \A_L$ with $\max S_A+3 < \min S_B-3$, 
	\begin{equation}\label{eq:mainresult}
		\begin{split}
\norm{ [ \tau_t^{H_L(\vec{\la})}(A), B]} \leq\frac{ C ^N }{\prod _{x\in Z}  |\la_x| \Gamma_x }\norm{A} \norm{B}\, G_N( t)\, F_N\big(d(S_A,S_B)\big)
		\end{split}
			\end{equation}
for all $t\in\R$ and $\vec{\la} = \big(\lambda_x\big)_{x\in Z}$,  where $N=|Z|$ for $Z= [\max S_A +3, \min S_B - 3]\cap \mc{F}$, 
\beq
\Gamma_x = \min_{i\not = j} |\gamma^{(x)}_i - \gamma^{(x)}_j|
\eeq
 and 
	\begin{equation}\label{eq:GFfunctions}
		\begin{split}
G_N(t) = v|t| (1+v|t|)^{N-1} e^{v|t|} \quad\text{ }\quad
F_N(d) = (\mu d)^N e^{-\mu d}.
		\end{split}
	\end{equation}
The constant $C$ can be taken as
	\begin{equation}
		\begin{split}
C =\frac{ 444 C_0^2  e^{5\mu }}{\mu(1- e^{-\mu })}  \norm{\Phi}\binom{D}{2}^2
		\end{split}
	\end{equation}
\end{theorem}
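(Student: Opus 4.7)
The plan is to argue by induction on $N = |Z|$. The base case $N = 0$ follows directly from Theorem~\ref{thm:apLR} applied with $\Psi_L = V_L(\vec\la)$, since that bound is insensitive to single-site additions. For the inductive step, I will extract exactly one factor of $C/(|\la_{x_0}|\Gamma_{x_0})$ for a chosen impurity $x_0\in Z$ and reduce to a commutator involving only the $N-1$ impurities in $Z\setminus\{x_0\}$, so that the inductive hypothesis closes the loop.

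The structural observation driving the single-impurity extraction is that any propagation path from $A$ to $B$ through site $x_0$ must cross both bonds $\Phi_{x_0-1,x_0}$ and $\Phi_{x_0,x_0+1}$. Decomposing each such bond in the eigenbasis of $V_{x_0}$ into a part diagonal in $V_{x_0}$ and an off-diagonal part, the crucial point is that purely diagonal paths do not propagate across $x_0$: the eigenprojectors $P_i^{(x_0)}$ of $V_{x_0}$ commute with each other, so nested commutators of diagonal bond terms with an operator supported at $x_0 - 1$ cannot reach $x_0 + 1$. Hence the only nontrivial channel through $x_0$ involves at least one off-diagonal bond contribution.

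To convert this structural fact into a quantitative estimate I would pass to the interaction picture with respect to $\la_{x_0} V_{x_0}$. Writing $e^{itH_L(\vec\la)} = e^{it\la_{x_0}V_{x_0}} W(t)$, and using that $A$ and $B$ both commute with $V_{x_0}$ (their supports are at lattice distance at least $3$ from $x_0$ by the hypothesis $\max S_A + 3 < \min S_B - 3$), the outer conjugating factors cancel and $\norm{[\tau_t^{H_L(\vec\la)}(A), B]} = \norm{[W(t) A W(t)^*, B]}$. The generator of $W$ agrees with $H_L(\vec\la) - \la_{x_0}V_{x_0}$ except that the two bonds $\Phi_{x_0\pm 1, x_0}$ are replaced by their rotated counterparts; in this frame the off-diagonal components carry oscillating phases $e^{is\la_{x_0}(\gamma^{(x_0)}_i - \gamma^{(x_0)}_j)}$ of frequency at least $|\la_{x_0}|\Gamma_{x_0}$. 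In the Dyson series for $W(t)$, integration by parts in $s$ against these phases produces the factor $1/(|\la_{x_0}|\Gamma_{x_0})$, with the $\binom{D}{2}^2$ prefactor in $C$ accounting for the two bonds and the unordered eigenvalue pairs on each. The commutators remaining after integration by parts are then estimated by Theorem~\ref{thm:apLR} for the short-range propagation near $x_0$ and by the inductive hypothesis for the long-range part involving $Z\setminus\{x_0\}$.

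Iterating the single-impurity extraction $N$ times yields the prefactor $C^N/\prod_{x\in Z}|\la_x|\Gamma_x$; the $N$ nested time integrations assemble into $G_N(t) = v|t|(1+v|t|)^{N-1}e^{v|t|}$, while the combinatorial sum over $N$ spatial hops with total displacement at most $d(S_A, S_B)$ produces the decay $F_N(d) = (\mu d)^N e^{-\mu d}$. The spacing hypothesis $\sig_{\mc F} > \max\{1/\mu, 2\}$ guarantees that sites $x_0\pm 1$ are not themselves impurity sites and that the exponential LR decay between consecutive impurities dominates the polynomial growth in $N$. I expect the main obstacle to be verifying that the commutator produced after a single extraction step genuinely has the form required by the inductive hypothesis (with possibly enlarged supports and an additional time integral), so that the induction closes cleanly; the careful tracking of the constants $C_0$, $e^\mu$, $\norm{\Phi}$ through each of the $N$ iterations is what ultimately produces the explicit coefficient $444 C_0^2 e^{5\mu}/(\mu(1-e^{-\mu}))$ in $C$.
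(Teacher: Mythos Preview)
Your high-level strategy---induction on $N$, isolating one impurity at a time and extracting $1/(|\lambda_{x}|\Gamma_x)$ via integration by parts against the oscillating phase $e^{is\lambda_x(\gamma_j^{(x)}-\gamma_k^{(x)})}$---matches the paper's. The paper implements the single-impurity step not by expanding an interaction-picture propagator $W(t)$ in a Dyson series, but by a single Duhamel comparison with a \emph{pinched} Hamiltonian $\hat H_x(\vec\lambda)$ in which the two bonds touching $x$ are replaced by their diagonal parts $\sum_j P_j^{(x)}\Phi P_j^{(x)}$; your remark that ``purely diagonal paths do not propagate across $x_0$'' is made precise as Lemma~\ref{lem:diagonal-commutator}, which shows $[\hat\tau^x_t(A),B]=0$ exactly. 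The difference $H(\vec\lambda)-\hat H_x(\vec\lambda)$ is the off-diagonal remainder $\sum_{j\ne k} R^{(x)}_{jk}$, and Duhamel then yields $\int_0^t e^{is\lambda_x(\gamma_j^{(x)}-\gamma_k^{(x)})} f^{jk}_x(s,t)\,ds$, after which one integration by parts produces the desired factor (Proposition~\ref{prop:intbyparts}). A minor point: the paper starts the induction at $n=1$ via Proposition~\ref{prop:singleterm}, not at $N=0$; the functions $G_N,F_N$ as stated are not adapted to $N=0$.

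The genuine gap is precisely the step you yourself flag as ``the main obstacle.'' After integration by parts the surviving terms (from $\partial_s f^{jk}_x(s,t)$; see Lemma~\ref{lm:der}) are \emph{double} commutators of the form $\big\|[[W,\tau_{t-s}(A)],\hat\tau^{x}_{-s}(B)]\big\|$ with $W$ supported on $[x-2,x+2]$ and two \emph{different} dynamics acting on $A$ and $B$. These are not of the shape $\|[\tau_t(A'),B']\|$ to which the inductive hypothesis applies, and you propose no mechanism for reducing them to that shape. The paper supplies this mechanism as a separate technical tool, the double-commutator bound of Lemma~\ref{lem:doublecommutatortwo} and Corollary~\ref{lem:doublecommutatorone}: one localizes $\tau_{t-s}(A)$ and $\tau_{-s}^{(2)}(B)$ via the partial-trace conditional expectations $\m{E}_r$, telescopes in the localization radius, uses the Jacobi identity to reshuffle the three operators, and then invokes the inductive bound~\eqref{eq:specialLR} piece by piece on terms whose supports are under control. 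This produces an auxiliary function $h_\mu(S_A,S_W,S_B)$ that is shown to be dominated by a multiple of $F_n(d(S_A,S_B))$, while the time integral $\int_0^t G_{n-1}(t-s)e^{v|s|}\,ds$ combines with the boundary term $\|f^{jk}_x(0,t)\|$ to give $G_n(t)$. Without this localization argument the induction does not close; saying that the remaining commutators are ``estimated by Theorem~\ref{thm:apLR} for the short-range propagation and by the inductive hypothesis for the long-range part'' is not enough, because the object to be estimated is a three-fold commutator involving two evolved observables simultaneously.
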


We prove Theorem \ref{theorem:mainresult} by modifying the estimate in (\ref{eq:eq_LR}) by an inductive argument. In principle, we could assume that, under the assumptions of  Theorem \ref{thm:apLR}, (\ref{eq:eq_LR}) holds for a monotone rapidly decreasing function $f$ instead of $e^{-\mu d}$, e.g. $f(d) = e^{ - \mu d \log (d)}$ as in \cite{PRXQuantum.1.010303}, and derive a similar result without significant changes to the proof.

The velocity term of a Lieb-Robinson bound generally diverges with the strength of the interaction, so it is significant that large on-site terms can lead to
a stronger estimate in (\ref{eq:eq_LR}). However, our result does not show that a sparse field decreases the Lieb-Robinson velocity. This is because our 
method introduces a prefactor polynomial in $t$ and the $d$ to the commutator bound which, for large times, diverges with the number of field sites. Proposition \ref{prop:singleterm} shows that in the case $N=1$, we can choose $F_1$ to have a marginally better dependence on $d(S_A,S_B)$. We leave 
open the question of whether the bound in (\ref{eq:mainresult}) would hold if $G_N(t) = e^{v|t|}$ and $F_N(d)=e^{-\mu d}$. 

Corollary \ref{corollary:mainresult}, below, follows immediately from Theorem \ref{theorem:mainresult} and makes the statement in (\ref{eq:heuristic}) precise. Let $\mathcal{T}_x: \A \to \A$ denote the translation operator which maps $\A_{\set{0}}$ to $\A_{\set{x}}$. 

\begin{corollary}\label{corollary:mainresult}
Let $\mu>0$, $\sig_\mc{F} > \max \set{1/\mu, 2}$. Suppose $V_x = \mathcal{T}_x(V_0)$ and $\lambda_x = \lambda$ for all $x\in\mathcal F$. 
Then for all $A,B \in \A_L$ with $\max S_A+3 < \min S_B-3$ and  $Z = [\max S_A +3, \min S_B - 3]\cap \mc{F}$, we obtain
\beq
\norm{ [ \tau_t^{H_L(\vec{\la})}(A), B]} \leq\bigg( K \frac{ \mu d(S_A,S_B) (1+v|t|) }{\lambda }\bigg)^N\norm{A} \norm{B}\, e^{v|t|}\, e^{-\mu d(S_A,S_B)}.
\eeq
where we have set $N= |Z|$. The constant $K>0$ depends on $\mu,D, \Phi$ and $V_0$. 
\end{corollary}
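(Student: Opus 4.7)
The plan is to derive Corollary \ref{corollary:mainresult} as a direct algebraic consequence of Theorem \ref{theorem:mainresult} under the two extra hypotheses (translation invariance of the impurities and a common coupling $\lambda$). Nothing new needs to be proved about the dynamics; the work is just to collect the $N$-th powers in the bound of Theorem \ref{theorem:mainresult} into the desired single $N$-th power.

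First I would use the assumption $V_x = \mathcal{T}_x(V_0)$. Since translation is a $*$-automorphism, the spectrum of $V_x$ equals the spectrum of $V_0$ for every $x\in \mc{F}$, and hence the minimum spectral gap $\Gamma_x = \min_{i\ne j}|\gamma_i^{(x)}-\gamma_j^{(x)}|$ is independent of $x$; call this common value $\Gamma_0 > 0$ (positive by the non-degeneracy assumption already built into \eqref{eq:onsite-impurity}). Combined with $\lambda_x = \lambda$, this gives
\[
\prod_{x\in Z}|\lambda_x|\,\Gamma_x = \bigl(|\lambda|\,\Gamma_0\bigr)^N.
\]

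Next I would insert this into the bound of Theorem \ref{theorem:mainresult} and regroup the $N$-th powers, writing
\[
\frac{C^N}{(|\lambda|\Gamma_0)^N}\,G_N(t)\,F_N(d)
= \frac{v|t|}{1+v|t|}\left(\frac{C}{\Gamma_0}\cdot\frac{\mu d\,(1+v|t|)}{|\lambda|}\right)^{\!N} e^{v|t|}\,e^{-\mu d},
\]
where $d = d(S_A, S_B)$. I then use the elementary estimate $v|t|/(1+v|t|)\le 1$ to discard the leading scalar factor. Setting $K := C/\Gamma_0$, which depends only on $\mu$, $D$, $\norm{\Phi}$ (via $C$) and on $V_0$ (via $\Gamma_0$), gives exactly the form stated in the corollary.

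There is no genuine obstacle in this argument; the only thing to watch is that the $(1+v|t|)^{N-1}$ from $G_N(t)$ together with the single extra factor of $v|t|$ combines with the $(\mu d)^N$ from $F_N(d)$ into a single $N$-th power with one leftover scalar $v|t|/(1+v|t|)\le 1$, so that absolutely no extra $t$- or $d$-dependent prefactors remain outside the $N$-th power. A minor convention point is that the corollary writes $\lambda$ in the denominator while Theorem \ref{theorem:mainresult} has $|\lambda_x|$; this is harmless since the statement of the corollary is an upper bound, and one may take $|\lambda|$ throughout.
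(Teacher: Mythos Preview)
Your argument is correct and matches the paper's approach: the paper simply states that the corollary ``follows immediately from Theorem \ref{theorem:mainresult},'' and your derivation supplies exactly the straightforward algebra needed, namely collapsing $\prod_{x\in Z}|\lambda_x|\Gamma_x$ to $(|\lambda|\Gamma_0)^N$ and absorbing the $v|t|(1+v|t|)^{N-1}$ and $(\mu d)^N$ factors into a single $N$-th power via $v|t|/(1+v|t|)\le 1$. Your remark about $\lambda$ versus $|\lambda|$ is also apt.
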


\section{Proofs}
When the context is unambiguous, we omit the $L$ dependence from the notation, since all estimates will be independent of $L$, and write $H_L= H$.  

\subsection{Auxiliary results}\label{ss:auxresult} For $x\in \mathcal F\cap [-L+2,L-2]$ we define
	\begin{equation}
		\begin{split}
\hat{H}_x  = \sum _{y=-L}^{x-2} \Phi_{y,y+1} + \sum _{j=1}^{D} P_j^{(x)} & ( \Phi_{x-1,x} + \Phi_{x,x+1}) P_j^{(x)} + \sum _{y=x+1}^{L-1} \Phi_{y,y+1}
		\end{split}
	\end{equation}
and set for $\vec{\la} = \big(\lambda_x\big)_{x\in Z}$
\beq
\hat{H}_x(\vec{\la}) = \hat{H}_x + V(\vec{\la}).
\eeq
We will use the short-hand notations
\beq\label{def_V_tau}
\overline{V}_x(\vec{\la}) = \sum _{y\not = x} \la_y V_y, \quad \tau_t=\tau_t^{H(\vec{\la})} , \quad \text{and} \quad \hat{\tau}_t^x=\tau _t^{ \hat{H}_x(\vec{\la})} .
\eeq

\begin{lemma}\label{lem:diagonal-commutator}
Let $A, B \in \A_L$ and $x\in \mathcal F\cap [-L+2, L-2]$ such that $\max S_A < x <  \min S_B$. Then
	\begin{equation}
		\begin{split}
[\hat{\tau}_t^x(A) , B ] =0.
		\end{split}
	\end{equation}
\end{lemma}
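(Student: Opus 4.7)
The plan is to exploit that $\hat{H}_x(\vec{\la})$ is block-diagonal with respect to the rank-one eigenprojections $\{P_j^{(x)}\}_{j=1}^D$ at site $x$ (rank-one by the non-degeneracy assumption on $V_x$), and that inside each block the dynamics factors into a piece acting strictly to the left of $x$ and one acting strictly to the right. First I would set $Q_j = \1_{[-L,x-1]}\otimes P_j^{(x)} \otimes \1_{[x+1,L]}$, which form a resolution of the identity by mutually orthogonal projections, and verify $[Q_j,\hat{H}_x(\vec{\la})]=0$: the summands of $\hat H_x$ acting away from $\{x-1,x,x+1\}$ involve no operators at site $x$; the sandwiched term $\sum_k P_k^{(x)}(\Phi_{x-1,x}+\Phi_{x,x+1})P_k^{(x)}$ commutes with $P_j^{(x)}$ by orthogonality of the eigenprojections; and each $\la_y V_y$ is either supported away from $x$ or (when $y=x$) already diagonal in the $P_j^{(x)}$ basis.

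Next I would use rank-oneness to write
\[
P_j^{(x)}\Phi_{x-1,x} P_j^{(x)} = C_j^{(x-1)} \otimes P_j^{(x)},\qquad P_j^{(x)}\Phi_{x,x+1} P_j^{(x)} = P_j^{(x)} \otimes D_j^{(x+1)}
\]
for some $C_j^{(x-1)}\in\A_{\{x-1\}}$ and $D_j^{(x+1)}\in\A_{\{x+1\}}$. Combining this with the remaining terms shows that on the invariant subspace $Q_j\H \cong \H_{[-L,x-1]}\otimes \cc \otimes \H_{[x+1,L]}$ the restricted Hamiltonian takes the form $H_j^L + c_j \1 + H_j^R$, with $H_j^L$ acting only on the left tensor factor, $H_j^R$ only on the right, and the two commuting.

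Finally, since $\max S_A<x$ we have $A\in\A_{[-L,x-1]}$, so $A$ commutes with each $P_j^{(x)}$ and hence with each $Q_j$; in particular $Q_j A Q_j$ lives in the left factor of $Q_j\H$. Because $H_j^L$ and $H_j^R$ commute, the block evolution reduces to
\[
Q_j\,\hat{\tau}_t^x(A)\,Q_j = e^{it H_j^L}\,(Q_j A Q_j)\, e^{-it H_j^L},
\]
which remains supported in the left factor. Similarly $B\in\A_{[x+1,L]}$ gives $Q_j B Q_j$ supported in the right factor. Operators in the left and right tensor factors commute, so the commutator vanishes blockwise, and summing over $j$ yields $[\hat{\tau}_t^x(A),B]=0$. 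The only real obstacle is the bookkeeping of keeping track of tensor factors; once the rank-one structure of $P_j^{(x)}$ is invoked there is no analytic difficulty.
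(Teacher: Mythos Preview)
Your proof is correct and rests on the same key observation as the paper's: the rank-one eigenprojections $P_j^{(x)}$ allow the sandwiched bond terms to factor as $C_j^{(x-1)}\otimes P_j^{(x)}$ and $P_j^{(x)}\otimes D_j^{(x+1)}$, so that the dynamics decouples into pieces acting on $[-L,x-1]$ and $[x+1,L]$. The paper organizes this slightly differently: instead of a block decomposition via the $Q_j$, it writes $\hat H_x=\hat H_x^\ell+\hat H_x^r$ with $\hat H_x^\ell\in\A_{[-L,x]}$, $\hat H_x^r\in\A_{[x,L]}$, verifies $[\hat H_x^\ell,\hat H_x^r]=0$ globally (using exactly your factorization), and concludes $\hat\tau_t^x(A)\in\A_{[-L,x]}$, hence commutes with $B$. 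Your blockwise argument is equally valid and uses no additional input; the paper's route has the minor bonus of yielding the explicit support statement $\hat\tau_t^x(A)\in\A_{[-L,x]}$, but this is not needed elsewhere.
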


\begin{proof}
From the definition of $\hat H_x$ it follows that $V_x$ commutes with $\hat{H}_x$ and with $V_y$ as well when $y\not  = x$.  And so, since $\max S_A < x$
	\begin{equation}
		\begin{split}
\hat{\tau}_t (A)  =  \tau^{ \hat{H}_x +\overline{V}_x(\vec{\la}) }    _t \tau_t^{ \la_x V_x} (A)  = \tau^{\hat{H}_x + \overline{V}_x(\vec{\la}) }    _t (A).
		\end{split}
	\end{equation}
We write $\hat{H}_x= \hat{H}_x^\ell + \hat{H}_x^r $ with $\hat H_x^\ell$ supported on $[-L, x]$, and $\hat H_x^r$ supported on $[x, L]$: 
	\begin{equation}\label{eq:twoham}
		\begin{split}
\hat{H}_x^\ell & = \sum _{y=-L}^{x-2} \Phi_{y,y+1}  + \sum _{j=1}^{D} P_j^{(x)} \Phi_{x-1,x} P_j^{(x)} \\ 
\hat{H}_x^r & =\sum _{j=1}^{D} P_j^{(x)} \Phi_{x,x+1} P_j^{(x)}+  \sum _{y=x+1}^{L-1} \Phi_{y,y+1}  
		\end{split}.
	\end{equation}
Let $j\in\{1,...,D\}$. By assumption, $P_j^{(x)} = | \psi_j^{(x)} \rangle \langle \psi_j^{(x)}|$ where $\psi_j^{(x)}\in  \cc^{D}$ is a unit norm eigenvector to the simple eigenvalue $\gamma_j^{(x)}$ of $V_x$. Expanding $\Phi_{x-1,x}$ into a sum of elementary tensors $\Phi_{x-1,x} = \sum_k \Phi_{x-1}^{(k)} \otimes \Phi_{x}^{(k)}$, $\Phi_{x-1}^{(k)}\in \A_{\set{x-1}}$ and $\Phi_{x}^{(k)} \in \A_{\set{x}}$, shows that
	\begin{equation}
		\begin{split}
P_j^{(x)} \Phi_{x-1,x} P_j^{(x)} & = \sum _k \Phi_{x-1}^{(k)} \otimes  \ip{ \psi_j^{(x)}, \Phi_{x}^{(k)} \psi_j^{(x)}} P_j^{(x)} \\
& = \bigg{(} \sum _k \ip{ \psi_j^{(x)}, \Phi_{x}^{(k)} \psi_j^{(x)}}  \Phi_{x-1}^{(k)} \bigg{)} \otimes P_j^{(x)}.
		\end{split}
	\end{equation} 
Similarly, there exists $\tilde{\Phi}_{x+1} \in \A_{\set{x+1}}$ such that  $P_j^{(x)} \Phi_{x,x+1} P_j^{(x)} = P_j^{(x)} \otimes \tilde{\Phi}_{x+1} $, which proves that $[P_j^{(x)} \Phi_{x-1,x} P_j^{(x)}, P_j^{(x)} \Phi_{x,x+1} P_j^{(x)}]= 0 $ and in turn $[\hat{H}_x^\ell, \hat{H}_x^r]=0$. Hence
	\begin{equation}
		\begin{split}
\bigg{[} \hat{H}_x^\ell + \sum _{y<x} \la _y V_y  , ~ \hat{H}_x^r + \sum _{y>x} \la_y V_y \bigg{]} = 0
		\end{split}
	\end{equation}
and since $\max S_A < x$, 
	\begin{equation}
		\begin{split}
\hat{\tau}_t^x(A) & = \tau_t ^{ \hat{H}_x^\ell + \sum _{y< x} \la_y V_y} (A) \in \A_{ [-L, x]}. 
		\end{split}
	\end{equation}
As $x < \min S_B$, this implies $[\hat{\tau}_t^x(A), B]=0$. 
\end{proof}

\begin{proposition}\label{prop:intbyparts}
Let $A,B \in \A_L$, and suppose $x\in \mc{F} \cap [-L+2,L-2]$ such that $\max S_A +1 <x< \min S_B $. Then for all $t\in\R$
	\begin{equation}\label{eq:integral}
		\begin{split}
\norm{[ \tau _t(A), B]} \leq \frac{1}{|\la_x|\Gamma_x} \sum _{ \substack{j,k\in \set{1,\ldots,D} \\ j\not = k} }\bigg{(} \norm{ f_x^{jk}(0,t)} + \int _0^{|t|} ds ~ \norm{\frac{d}{ds} f_x^{jk}(s,t) } \bigg{)}
		\end{split}
	\end{equation}
where  we have set for $s\in\R$
	\begin{equation}
		\begin{split}
f^{jk}_x (s,t)& = [[ \tau ^{\hat{H}_x + \overline{V}_x(\vec{\la})} _{s} (R_{jk}^{(x)}) , \hat{\tau}_s^x\tau_{t-s}(A)], B]  \\
R_{jk}^{(x)} & = P_j^{(x)} ( \Phi_{x-1,x} + \Phi_{x,x+1}) P_k^{(x)} 
		\end{split}.
			\end{equation}
\end{proposition}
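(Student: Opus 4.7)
The plan is to combine a Duhamel expansion interpolating between $\tau_t$ and the ``decoupled'' dynamics $\hat\tau_t^x$ with an oscillating-integral estimate that extracts a factor $1/(|\la_x|\Gamma_x)$ from the impurity at site $x$. Since the support hypothesis $\max S_A + 1 < x < \min S_B$ puts us in the setting of Lemma~\ref{lem:diagonal-commutator}, we have $[\hat\tau_t^x(A), B] = 0$, and hence $[\tau_t(A), B] = [\tau_t(A) - \hat\tau_t^x(A), B]$. Using the resolution of identity $\sum_j P_j^{(x)} = \1$ one computes
\[
H_L(\vec\la) - \hat H_x(\vec\la) = (\Phi_{x-1,x} + \Phi_{x,x+1}) - \sum_j P_j^{(x)}(\Phi_{x-1,x} + \Phi_{x,x+1})P_j^{(x)} = \sum_{j\neq k} R_{jk}^{(x)},
\]
so applying Duhamel's formula to $s\mapsto \hat\tau_s^x(\tau_{t-s}(A))$ and commuting with $B$ gives
\[
[\tau_t(A), B] = i \sum_{j\neq k} \int_0^t \big[ \hat\tau_s^x([R_{jk}^{(x)}, \tau_{t-s}(A)]), B \big]\, ds.
\]

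The core algebraic step is to pull a pure phase out of $\hat\tau_s^x(R_{jk}^{(x)})$. The impurity $V_x$ is diagonal with respect to $\{P_j^{(x)}\}$, so it commutes with every term of $\hat H_x$, and it trivially commutes with $\overline V_x(\vec\la)$. Thus $[\la_x V_x, \hat H_x + \overline V_x(\vec\la)] = 0$ and the dynamics factorises as $\hat\tau_s^x = \tau_s^{\hat H_x + \overline V_x(\vec\la)} \circ \tau_s^{\la_x V_x}$. Because $V_x P_j^{(x)} = \gamma_j^{(x)} P_j^{(x)}$, the operator $R_{jk}^{(x)}$ is an eigenvector of $\mathrm{ad}(V_x)$ with eigenvalue $\gamma_j^{(x)} - \gamma_k^{(x)}$, so
\[
\tau_s^{\la_x V_x}(R_{jk}^{(x)}) = e^{is \alpha_{jk}}\, R_{jk}^{(x)}, \qquad \alpha_{jk} := \la_x\big(\gamma_j^{(x)} - \gamma_k^{(x)}\big).
\]
Since $\hat\tau_s^x$ is an automorphism, the identity above turns the Duhamel formula into
\[
[\tau_t(A), B] = i \sum_{j\neq k} \int_0^t e^{is\alpha_{jk}}\, f_x^{jk}(s,t)\, ds.
\]

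The remaining step is an integration by parts in $s$ against the oscillating factor $e^{is\alpha_{jk}}$. Because $j\neq k$ forces $|\alpha_{jk}| \geq |\la_x|\Gamma_x > 0$, writing $e^{is\alpha_{jk}} = (i\alpha_{jk})^{-1} \partial_s e^{is\alpha_{jk}}$ and integrating by parts produces the prefactor $(i\alpha_{jk})^{-1}$, a boundary term at $s=0$ of norm $\norm{f_x^{jk}(0,t)}$, a boundary term at $s=t$, and the integral of $\partial_s f_x^{jk}(s,t)$. The $s=t$ contribution is absorbed via the elementary bound $\norm{f_x^{jk}(t,t)} \leq \norm{f_x^{jk}(0,t)} + \int_0^{|t|} \norm{\partial_s f_x^{jk}(s,t)}\, ds$, and collecting all terms (with numerical constants absorbed into the constant $C$ of Theorem~\ref{theorem:mainresult}) yields (\ref{eq:integral}).

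The crucial point --- and the whole reason $\hat H_x$ was defined by pinching the two nearest-neighbor terms at $x$ with $\sum_j P_j^{(x)}(\cdot) P_j^{(x)}$ --- is to arrange that $V_x$ commutes with $\hat H_x + \overline V_x(\vec\la)$, so that the phase $e^{is\alpha_{jk}}$ separates cleanly from the spatial dynamics. This is the main obstacle, and once it is achieved, integration by parts converts the normally-problematic impurity norm $\norm{\la_x V_x}$ (which would otherwise inflate the Lieb-Robinson velocity) into the gain $1/(|\la_x|\Gamma_x)$ that will be iterated in Theorem~\ref{theorem:mainresult}.
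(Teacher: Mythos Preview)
Your overall strategy---Duhamel interpolation between $\tau_t$ and $\hat\tau_t^x$, factoring the phase $e^{is\alpha_{jk}}$ out of $\hat\tau_s^x(R_{jk}^{(x)})$ via $[\la_x V_x,\hat H_x+\overline V_x(\vec\la)]=0$, then integrating by parts---is exactly the paper's approach. The only substantive divergence is your treatment of the boundary term at $s=t$.

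You bound $\norm{f_x^{jk}(t,t)}$ by the fundamental theorem of calculus, which introduces an extra factor of $2$ on the right-hand side. That does \emph{not} yield (\ref{eq:integral}) as stated; it yields (\ref{eq:integral}) with a $2$ in front, and saying the $2$ is ``absorbed into the constant $C$ of Theorem~\ref{theorem:mainresult}'' is not a proof of this proposition. The paper instead shows that $f_x^{jk}(t,t)=0$ exactly: since $\hat\tau_t^x=\tau_t^{\hat H_x+\overline V_x(\vec\la)}\circ\tau_t^{\la_x V_x}$ and $\tau_{-t}^{\la_x V_x}(R_{jk}^{(x)})$ is a scalar multiple of $R_{jk}^{(x)}$, one has
\[
\norm{f_x^{jk}(t,t)}=\bigl\|\bigl[\hat\tau_t^x\bigl([R_{jk}^{(x)},A]\bigr),B\bigr]\bigr\|=0,
\]
because the hypothesis $\max S_A+1<x$ forces $S_A\cap[x-1,x+1]=\emptyset$ and hence $[R_{jk}^{(x)},A]=0$. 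This is precisely where the ``$+1$'' in the support assumption is used (Lemma~\ref{lem:diagonal-commutator} alone only needs $\max S_A<x$), and it is what gives the sharp constant in (\ref{eq:integral}). Your argument never uses this part of the hypothesis.
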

	
\begin{proof}
Without loss of generality, we assume $t>0$.
Lemma~\ref{lem:diagonal-commutator} implies that $[\hat{\tau}^x_t(A),B] = 0$. Hence  Duhamel's formula gives
\begin{align}
[ \tau_t (A), B] & = [ \tau_t (A), B] - [\hat{\tau}^x_t(A),B] \notag\\ 
& = i \int _0^t ds ~ [ \hat{\tau}^x_s( [H(\vec{\la}) - \hat{H}_x(\vec{\la}) , \tau_{t-s}(A)]), B] .\label{pf_thm_eq1}
\end{align}
We write 
\beq\label{def:R}
H(\vec{\la}) - \hat{H}_x(\vec{\la}) = \sum _{\substack{j,k \in \set{1,\ldots,D} :\\ j\not = k}} R_{jk}^{(x)}
\eeq
with $R_{jk}^{(x)}= P^{(x)}_j ( \Phi _{x-1,x} + \Phi_{x,x+1}) P^{(x)}_k$ for $j,k\in\{1,...,D\}$, $j\neq k$. 
Since $[\hat H_x + \overline{V}_x(\vec{\la}), V_x] = 0$, we obtain  for $s\in\R$
	\begin{align}\label{eq:evaluation}
\hat{\tau}^x_s(R) 
 =  \tau_s^{\hat{H}_x + \overline{V}_x(\vec{\la}) } \tau_s^{\lambda_x V_x }( R_{jk}^{(x)})
 = \sum _{ \substack{j,k\in \set{1,\ldots,D} \\ j\not = k} }e^{ is \la_x (\gamma_j^{(x)} - \gamma_k^{(x)})} \tau_s^{\hat{H}_x + \overline{V}_x(\vec{\la}) } ( R_{jk}^{(x)}).
	\end{align}
This implies
\begin{align}
\int _0^t ds \, [ \hat{\tau}^x_s( [H(\vec{\la}) - \hat{H}_x(\vec{\la}) , \tau_{t-s}(A)]), B]
&= 
\int _0^t ds \, [  [\hat{\tau}^x_s(H(\vec{\la}) - \hat{H}_x(\vec{\la})) , \hat{\tau}^x_s\tau_{t-s}(A)]), B] \notag\\
& = 
\sum _{ \substack{ j,k\in \set{1,\ldots,D} \\ j\not = k} }
\int _0^t ds \, e^{ is \la_x (\gamma_j^{(x)} - \gamma^{(x)}_k)}  f^{jk}_x (s,t)\label{pf_thm_eq2}
\end{align}
where we have set $f^{jk}_x (s,t)= [[ \tau ^{\hat{H}_x + \overline{V}_x(\vec{\la})} _{s} (R_{jk}^{(x)}) , \hat{\tau}^x_s\tau_{t-s}(A)], B].$  For $j,k\in\{1,...,D\}$ with $j\neq k$ integration by parts yields
	\begin{align}\label{eq:intbyparts}
&\Big\| \int _0^t ds\, e^{ is \la_x (\gamma_j - \gamma_k)} f^{jk}_x(s,t)\Big\| \notag\\
\leq &\frac{1}{|\la_x|}\frac{1}{|\gamma_j^{(x)} - \gamma_k^{(x)}| } \bigg{(} \norm{ f^{jk}_x(t,t)}+\norm{ f^{jk}_x(0,t)} + \int_0^t ds\,\norm{ \frac{d}{ds} f^{jk}_x(s,t) } \bigg{)}.
	\end{align}
Next we show $f^{jk}_x(t,t)=0$. Since $V_x$ commutes with $\hat{H}_x + \oV$, we can rewrite
\beq
\|f^{jk}_x(t,t)\| = \big\|[[ \tau ^{\hat{H}_x + \oV} _{t} (R_{jk}^{(x)}) , \hat{\tau}^x_t(A)], B]\big\|
=
\big\|[  \hat{\tau}^x_t\big( [ \tau ^{ \la_x V_x} _{-t} (R_{jk}^{(x)}) , A]\big), B]\big\|.
\eeq
Now $\tau ^{ \la_x V_x} _{-t} (R_{jk}^{(x)}) = e^{i t \la_x (\gamma^{(x)}_k-\gamma^{(x)}_j)} R_{jk}^{(x)}$, and so
\beq
\big\|[  \hat{\tau}^x_t\big( [ \tau ^{\la_x V_x} _{-t} (R_{jk}^{(x)}) , A]\big), B]\big\| = 
\big\|[  \hat{\tau}^x_t\big( [ R_{jk}^{(x)} , A]\big), B]\big\|
=0 \label{prop_pf_eq-1}
\eeq
 where the last equality follows from the assumption that $S_A \cap [x-1,x+1]=\emptyset$.
\end{proof}

We will need the explicit form of the derivative $\frac{d}{ds} f_x^{jk}(s,t)$ in the proof of Theorem~\ref{theorem:mainresult}. 

\begin{lemma}\label{lm:der}
Let $A, B \in \A_L$, $\sigma_{\mathcal F}\geq 2$ and $x\in \mathcal F\cap [-L+2,L-2]$. Then for all $j,k\in\{1,...,D\}$ with $j\neq k$ and  $s,t\in\R$,
\begin{align}\label{eq:derivative}
& \frac{d}{ds} f^{jk}_{x}(s,t) = i \bigg{[}\bigg{[}  \tau_s^{\hat{H}_x+\oV}([\hat{H}_x, R_{jk}^{(x)}]) , \hat{\tau}^x_s \tau_{t-s} (A)  \bigg{]}, B \bigg{]} \notag\\
& \hspace{30mm}- i\sum _{\substack{l,r \in \set{1,\ldots, D} \\ l \not = r}} \bigg{[} \bigg{[} \tau_s^{\hat{H}_x+\oV}(R_{jk}^{(x)}), [ \hat{\tau}^x_s(R_{lr}^{(x)}), \hat{\tau}^x_s\tau_{t-s}(A)]   \bigg{]},B \bigg{]}.
\end{align}
\end{lemma}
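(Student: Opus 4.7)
The plan is to apply the Leibniz rule for the derivative of a double commutator, treating the three slots of $[[X(s),Y(s)],B]$ where $X(s) = \tau_s^{\hat H_x + \oV}(R_{jk}^{(x)})$, $Y(s) = \hat\tau_s^x \tau_{t-s}(A)$, and $B$ is $s$-independent. Since $[X(s),\cdot]$ and $[\cdot,B]$ are derivations, the derivative splits as
\[
\frac{d}{ds} f^{jk}_x(s,t) = [[X'(s),Y(s)],B] + [[X(s),Y'(s)],B],
\]
so the task reduces to computing $X'(s)$ and $Y'(s)$ in a form that exhibits the structure of \eqref{eq:derivative}.

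For $X'(s)$ I apply the Heisenberg equation to get
\[
X'(s) = i\, \tau_s^{\hat H_x + \oV}\big([\hat H_x + \oV, R_{jk}^{(x)}]\big).
\]
The key observation is that $\oV = \sum_{y\neq x}\la_y V_y$ commutes with $R_{jk}^{(x)} = P_j^{(x)}(\Phi_{x-1,x}+\Phi_{x,x+1})P_k^{(x)}$: the support of $R_{jk}^{(x)}$ is contained in $[x-1,x+1]$, and under the hypothesis $\sig_\mc{F}\geq 2$ no other site of $\mc{F}$ lies in $\{x-1,x,x+1\}$, so each $V_y$ in the sum is disjointly supported from $R_{jk}^{(x)}$. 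This collapses the commutator to $[\hat H_x, R_{jk}^{(x)}]$ and yields the first term on the right-hand side of \eqref{eq:derivative}.

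For $Y'(s)$ I use the product rule on $\hat\tau_s^x\tau_{t-s}(A)$. The outer evolution contributes $i\hat\tau_s^x\big([\hat H_x(\vec\la),\tau_{t-s}(A)]\big)$ while the inner gives $-i\hat\tau_s^x\big([H(\vec\la),\tau_{t-s}(A)]\big)$, so the two combine into $-i\hat\tau_s^x\big([H(\vec\la)-\hat H_x(\vec\la),\tau_{t-s}(A)]\big)$. Invoking \eqref{def:R}, namely $H(\vec\la)-\hat H_x(\vec\la) = \sum_{l\neq r} R_{lr}^{(x)}$, and distributing $\hat\tau_s^x$ across the commutator turns this into
\[
Y'(s) = -i\sum_{\substack{l,r\in\{1,\ldots,D\}\\ l\neq r}} \big[\hat\tau_s^x(R_{lr}^{(x)}),\, \hat\tau_s^x\tau_{t-s}(A)\big].
\]
Plugging the expressions for $X'(s)$ and $Y'(s)$ back into the Leibniz expansion gives exactly \eqref{eq:derivative}.

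The only substantive subtlety is the vanishing of $[\oV, R_{jk}^{(x)}]$ — without $\sig_\mc{F}\geq 2$ one would pick up extra terms from neighboring impurities; since this hypothesis is explicitly assumed in the lemma, that obstacle is immediate. The remaining computation is bookkeeping with the Heisenberg equation and chain rule, with care taken to track the signs coming from the backward time direction in $\tau_{t-s}(A)$.
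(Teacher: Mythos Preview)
Your proof is correct and follows essentially the same approach as the paper: both compute the two derivatives $\frac{d}{ds}\tau_s^{\hat H_x+\oV}(R_{jk}^{(x)})$ and $\frac{d}{ds}\hat\tau_s^x\tau_{t-s}(A)$ separately via the Heisenberg equation, use the disjoint-support argument (from $\sigma_{\mc F}\geq 2$) to drop $\oV$ from the first, invoke \eqref{def:R} to expand the second, and then apply the product rule. The only difference is cosmetic---you frame the Leibniz rule explicitly in terms of $X(s)$ and $Y(s)$, whereas the paper simply states the two derivative computations and then applies the product rule.
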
 

\begin{proof}
We recall that $f^{jk}_x (s,t) = [[ \tau ^{\hat{H}_x + \overline{V}_x(\vec{\la})} _{s} (R_{jk}^{(x)}) , \hat{\tau}^x_s\tau_{t-s}(A)], B]$.
First, we compute
	\begin{equation}\label{eq:deriv-comp}
		\begin{split}
\frac{d}{ds} \tau_s^{ \hat{H}_x + \oV}(R_{jk}^{(x)}) & = i \big[ \hat{H}_x + \oV ,  \tau_s^{ \hat{H}_x + \oV} (R_{jk}^{(x)})\big]  \\
& =  i \tau_s^{ \hat{H}_x + \oV}\big{(}  [ \hat{H}_x  ,   R_{jk}^{(x)}] \big{)} 
		\end{split}
	\end{equation}
where the last equality in (\ref{eq:deriv-comp}) follows from $[\oV, R_{jk}^{(x)}] = 0$ since $d(x, \mathcal F\setminus\{x\})\geq 2$ and the support satisfies  $S_{R_{jk}^{(x)}}\subset[x-1,x+1]$. Secondly, recalling the definition of $\tau$, $\hat \tau ^x$ in \eqref{def_V_tau} and \eqref{def:R}, we obtain
	\begin{align}\label{eq:deriv-two}
\frac{d}{ds} \hat{\tau}^x_{s} \tau_{t-s}(A) 
& = - i\,\hat{\tau}^x_s\big([H(\vec{\la}) - \hat{H}_x(\vec{\la}), \tau_{t-s}(A) ]\big) \notag\\ 
& = 
 -i \sum _{\substack{l,r \in \set{1,\ldots, D} \\ l \not = r}} [\hat{\tau}^x_s (R_{lr}^{(x)}), \hat{\tau}^x_s \tau_{t-s}(A)].
	\end{align}  
Then the lemma follows from  (\ref{eq:deriv-comp}) and (\ref{eq:deriv-two}) and the product rule applied to the derivative of the $s$-dependent part of $f_x^{jk}(s,t)$. 
\end{proof}

\subsection{Case of a single impurity}\label{ss:basecase} In the following, we prove Theorem \ref{theorem:mainresult} in the case when there is only one impurity in between the supports of $A$ and $B$. To do so, we need to estimate the terms on the right hand side of \eqref{eq:integral}.

\begin{lemma}\label{lem:derivative-bound}
Let $A, B \in \A_L$ and suppose $\sig_\mc{F} \geq 2$. Let $x\in \mc{F}$ such that $\max S_A +3  < x < \min S_B -3$. Then for all $j,k\in\{1,...,D\}$ with $j\neq k$ and $s,t\in\R$
\label{lem:firstintegral}
\beq
\norm{ \frac{d}{ds} f^{jk}_{x}(s,t) } 
\leq 
C_\mu \binom{D}{2} \|A\| \|B\| \norm{\Phi} \mu d(x-3, S_B)ve^{ v |t|}  e^{ - \mu d(S_A, S_B)}
\eeq
	\begin{equation}
		\begin{split}
C_\mu =  218 \frac{ C_0^2 e^{5\mu}}{\mu  (1-e^{-\mu})}.
		\end{split}
	\end{equation}
\end{lemma}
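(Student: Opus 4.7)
The plan is to bound the two summands in the expression for $\frac{d}{ds}f^{jk}_x(s,t)$ given by Lemma~\ref{lm:der} separately, using Jacobi's identity together with the Lieb-Robinson bound of Theorem~\ref{thm:apLR}, and then consolidate the resulting exponential decay factors into $e^{-\mu d(S_A, S_B)}$ by exploiting that $x$ lies strictly between $S_A$ and $S_B$.

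As a preliminary, I would establish $\|[\hat{H}_x, R_{jk}^{(x)}]\| \leq c\|\Phi\|^2$ with explicit $c$, noting that the commutator is supported in $[x-2, x+2]$ since the only terms in $\hat{H}_x$ whose supports meet $[x-1, x+1]$ are the two nearest-neighbor interactions $\Phi_{x-2, x-1}$ and $\Phi_{x+1, x+2}$, together with the projected interactions $P_j^{(x)}\Phi_{x-1,x}P_j^{(x)}$ and $P_j^{(x)}\Phi_{x,x+1}P_j^{(x)}$. For the first summand in Lemma~\ref{lm:der}, I would apply the Jacobi identity $\|[[X,Y],B]\| \leq 2\|[X,B]\|\|Y\| + 2\|X\|\|[Y,B]\|$ with $X = \tau_s^{\hat{H}_x + \overline{V}_x(\vec{\la})}([\hat{H}_x, R_{jk}^{(x)}])$ and $Y = \hat{\tau}^x_s \tau_{t-s}(A)$. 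The commutator $\|[X,B]\|$ is controlled by Theorem~\ref{thm:apLR} for the dynamics $\tau^{\hat{H}_x + \overline{V}_x(\vec{\la})}$ using the initial support $[x-2, x+2]$ of $X$, while $\|[Y, B]\|$ is controlled by Theorem~\ref{thm:apLR} applied to the composite dynamics $\hat{\tau}^x_s \tau_{t-s}$, viewed as the Heisenberg evolution under a piecewise-constant time-dependent Hamiltonian of the same form (nearest-neighbor interaction plus on-site terms); this yields the desired $(e^{v|t|}-1)e^{-\mu d(S_A, S_B)}$.

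For the second summand in Lemma~\ref{lm:der}, setting $X_1 = \tau_s^{\hat{H}_x + \overline{V}_x(\vec{\la})}(R_{jk}^{(x)})$, $X_2 = \hat{\tau}^x_s(R_{lr}^{(x)})$ and $X_3 = \hat{\tau}^x_s\tau_{t-s}(A)$, the key identity is $[X_2, X_3] = \hat{\tau}^x_s([R_{lr}^{(x)}, \tau_{t-s}(A)])$, so $\|[X_2, X_3]\| = \|[R_{lr}^{(x)}, \tau_{t-s}(A)]\|$ and Theorem~\ref{thm:apLR} furnishes the decay factor $e^{-\mu d(x-1, S_A)}$. A nested application of Jacobi to $[[X_1, [X_2, X_3]], B]$ then reduces the expression to a sum of products of norms together with at most one commutator involving $B$ (or with $[X_2, X_3]$ itself), each bounded via Theorem~\ref{thm:apLR}. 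Summing over the ordered pairs $(l, r)$ with $l \neq r$ produces the combinatorial factor $\binom{D}{2}$ after accounting for symmetry.

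The resulting exponentials have the form $e^{-\mu d(x + c_1, S_B)}$, $e^{-\mu d(x - c_2, S_A)}$, or $e^{-\mu d(S_A, S_B)}$ for small integer constants $c_i$. Since $x$ strictly separates the supports, the identity $d(x-c_2, S_A) + d(x+c_1, S_B) = d(S_A, S_B) - c_1 - c_2$ consolidates paired factors into $e^{-\mu d(S_A, S_B)}$ up to $e^{c\mu}$ inflation. The \emph{main obstacle} is the handling of unpaired factors such as $e^{-\mu d(x+c_1, S_B)}$ (arising, e.g., from the first Jacobi summand of the first term), which are exponentially larger than $e^{-\mu d(S_A, S_B)}$ when $x$ is close to $S_B$. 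To convert these to the target decay at the price of a polynomial prefactor, one trades a factor of $(e^{v|s|}-1)$ for $v$ times a polynomial-in-distance factor by exploiting the Lieb-Robinson cone $v|s| \lesssim \mu d$, and absorbs a geometric sum along the chain between $x$ and $\min S_B$ into the factor $1/(\mu(1-e^{-\mu}))$ appearing in $C_\mu$. Careful accounting of all numerical constants (keeping track of $C_0$, $e^\mu$, $\binom{D}{2}$, and the norm $\|\Phi\|$) then yields the precise value $C_\mu = 218 C_0^2 e^{5\mu}/(\mu(1-e^{-\mu}))$.
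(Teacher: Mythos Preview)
Your plan contains a genuine gap in the treatment of the ``unpaired'' exponential factors, and the mechanism you describe for repairing it does not work.

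Concretely, when you apply Jacobi to the first summand with $X = \tau_s^{\hat{H}_x + \overline{V}_x(\vec{\la})}([\hat{H}_x, R_{jk}^{(x)}])$ and $Y = \hat{\tau}^x_s\tau_{t-s}(A)$, the term $2\|[X,B]\|\,\|Y\|$ is bounded via Theorem~\ref{thm:apLR} by a constant times $\|\Phi\|^2\|A\|\,(e^{v|s|}-1)\,e^{-\mu d(x+2,S_B)}$. This decays only in $d(x,S_B)$, not in $d(S_A,S_B)$, and the discrepancy is the factor $e^{\mu d(S_A,x)}$. Your proposed fix, ``trading $(e^{v|s|}-1)$ for $v$ times a polynomial-in-distance factor via the Lieb--Robinson cone $v|s|\lesssim\mu d$'', is not available: the bound must hold for every $s\in[0,t]$ (and is subsequently integrated), and for $s$ near $t$ nothing constrains $v|s|$ in terms of $d(S_A,x)$. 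No geometric sum in $m$ along $[x,\min S_B]$ can manufacture the missing $e^{-\mu d(S_A,x)}$ either, since the offending term simply does not see $S_A$ at all.

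The paper avoids this by \emph{not} applying Jacobi at this stage. Instead it first uses $[\hat{H}_x+\overline{V}_x(\vec{\la}),V_x]=0$ and $\tau_s^{\la_x V_x}(R_{jk}^{(x)})=e^{is\la_x(\gamma_j^{(x)}-\gamma_k^{(x)})}R_{jk}^{(x)}$ to strip the time evolution off the operator at $x$ and push it onto $A$ and $B$, rewriting the first summand as
\[
\big\|\big[\,[\, [\hat H_x,R_{jk}^{(x)}],\tau_{t-s}(A)\,],\,\hat\tau^{x}_{-s}(B)\,\big]\big\|.
\]
Only then does it invoke a dedicated double-commutator estimate (Corollary~\ref{lem:doublecommutatorone}, proved in the appendix via a local-approximation/telescoping decomposition of $\tau_{t-s}(A)$, not via Jacobi) which yields the product decay $d(x-3,S_B)\,e^{-\mu d(S_A,S_B)}$ directly. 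The $1/(1-e^{-\mu})$ in $C_\mu$ comes from that telescoping sum, not from anything at the level of your Jacobi bound. The second summand is handled the same way after the analogous rewriting. Without this algebraic step and the double-commutator lemma, your approach cannot close.
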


\begin{proof}
As before, assume $\norm{A}=\norm{B}=1$ and $t>0$. We use the expression for the derivative obtained in Lemma \ref{lm:der}. We consider the norm of the first term in the latter and further rewrite this term. The definitions of $R_{jk}^{(x)}$ and $\hat{H}_x$ imply $R_{jk}^{(x)} = e^{ -i s (\gamma_j^{(x)} - \gamma_k^{(x)})} \tau_s^{\la_{x}V_{x}}(R_{jk}^{(x)})$ and $[\hat H_x,V_x] = 0$. Therefore,
	\begin{equation}
		\begin{split}
[\hat{H}_x, R_{jk}^{(x)}] = e^{ -i s (\gamma_j^{(x)} - \gamma_k^{(x)})}  \tau^{ \la_{x} V_{x}}_s ( [\hat{H}_x , R_{jk}^{(x)}])
		\end{split}
	\end{equation}
from which we can rewrite, using $[\hat H_x + \oV ,V_x] = 0$,
\begin{align}
\norm{ [[ [ \tau_s^{\hat{H}_x+\oV}([\hat{H}_x, R_{jk}^{(x)}]) , \hat{\tau}^x_s \tau_{t-s} (A)  ], B ] }
&= 
\big\|\big[   \big[[\hat H_x, R_{jk}^{(x)}],\tau_{t-s}(A)\big] ,\hat{\tau}^x_{-s}(B) \big]\big\|.\label{eq-1}
\end{align}
The assumption $\max S_A + 3 <x$ implies $\max S_A < \min S_{ [ \hat{H}_x, R_{jk}^{(x)}]} -1 < \max S_{ [ \hat{H}_x, R_{jk}^{(x)}]} < \min S_B$, and so we are in position to apply 
 Corollary \ref{lem:doublecommutatorone} with $W= [\hat H, R_{jk}^{(x)}]$. This implies
 \begin{align}
\eqref{eq-1}
\leq 
C'  \norm{ [ \hat{H}_x, R_{jk}^{(x)}]}  e^{v |t|} d(x-3, S_B) e^{ - \mu d(S_A, S_B)}.  
\label{eq-2}
 \end{align}
 where $C' =  72 C_0^2\frac{ e^{6 \mu}}{1 - e^{-\mu}} $
 is chosen as in Corollary \ref{lem:doublecommutatorone} since $\diam{S_W} = 4$ in this case, and since $|t-s|+ |s| = t$ for $0<s<t$ . Moreover $\norm{ [\hat H_x, R_{jk}^{(x)}]} \leq 6 \|\Phi\|^2$ since 
$\|R_{jk}^{(x)} \|\leq 2 \|\Phi\|$.
Taking this together with \eqref{eq-1} and \eqref{eq-2}, we obtain
\begin{align}
\norm{ [[ [ \tau_s^{\hat{H}_x+\oV}([\hat{H}, R_{jk}^{(x)}]) , \hat{\tau}^x_s \tau_{t-s} (A)  ], B ]  } \leq 
6C' \norm{\Phi}^2   e^{v t} d( x-3, S_B) e^{ - \mu d(S_A, S_B)}.  
\label{pf_prop_eq0}
\end{align}
For the norm of the second term in \eqref{eq:derivative}, we first fix the indices $l\not = r$ of the term $R_{lr}^{(x)}$. We use $\tau^{\la_{x}V_{x} }_s(R_{jk}^{(x)}) = e^{i s\lambda_{x} (\gamma^{(x)}_j-\gamma^{(x)}_k)} R_{jk}^{(x)}$ to rewrite 
\begin{align}
\norm{ [[\tau_s^{\hat{H}_x+\oV}(R_{jk}^{(x)}), [ \hat{\tau}^x_s(R_{lr}^{(x)}),\hat{\tau}^x_s\tau_{t-s}(A)]   ],B]}=
\big\|\big[ \big[ R_{jk}^{(x)} , [  R_{lr}^{(x)} , \tau_{t-s}(A)]  \big]    , \hat{\tau}^x_{-s}( B ) \big]\big\|.\label{eq_prop_1}
\end{align}
Next we apply Jacobi's identity for commutators, 
\beq\label{Jacobi}
[[X,Y],Z] = -[[Y,Z],X] - [[Z,X],Y], \quad X,Y,Z\in\mathcal A_L,
\eeq
 and obtain
\begin{align}
&\big\|\big[ \big[ R_{jk}^{(x)} , [  R_{lr}^{(x)} , \tau_{t-s}(A)]  \big]    ,\hat{\tau}^x_{-s}( B ) \big]\big\|\notag\\
\leq &
2\|R_{jk}^{(x)}\| \big\|[\big[R_{lr}^{(x)},\tau_{t-s}(A)],\hat{\tau}^x_{-s}(B)\big]\big\| + 
2\big\|[ R_{jk}^{(x)}, \hat{\tau}^x_{-s}(B)]\big\| \big\|[R_{lr}^{(x)},\tau_{t-s}(A)]\big\|. \label{eq:7}
\end{align}
For the first norm in the above we use again Corollary \ref{lem:doublecommutatorone} with $W=R_{jk}^{(x)}$ to estimate
\beq
\big\|[\big[R_{jk}^{(x)},\tau_{t-s}(A)],\hat{\tau}^x_{-s}(B)\big]\big\| \leq 
2 C' \norm{\Phi}  e^{ v t} d(x-2, S_B) e^{ - \mu d(S_A, S_B)}.
\label{eq_prop_2}
\eeq
For the second term on the r.h.s of \eqref{eq:7} we use the \textit{a priori} Lieb-Robinson from Theorem \ref{thm:apLR} with $\chi=\Phi$, $\Psi_L = \oV$ and $\mu>0$ chosen as before. This results in the bound
\begin{align}
\big\|[ R_{jk}^{(x)}, \hat{\tau}^x _{-s}(B)]\big\| &\big\|[R_{lr}^{(x)},\tau_{t-s}(A)]\big\| 
\leq 
4 C_0^2e^{3\mu } \norm{\Phi}^2 e^{vt} e^{ - \mu d(S_A, S_B)}.
\label{eq_prop_3}
\end{align}
Taking together the computations in \eqref{eq_prop_1} -- \eqref{eq_prop_3} and using the facts that $v= 8 e^{\mu}K_\mu \norm{\Phi}$ and $e^\mu K_\mu \geq1$, we obtain for all $l\neq r$
\begin{align}
&\norm{ [[ \tau_s^{\hat{H}_x+\oV}(R_{jk}^{(x)}), [ \hat{\tau}^x_s(R_{lr}^{(x)}),\hat{\tau}^x_s\tau_{t-s}(A)]  ],B ] }     \notag\\
&\hspace{30mm}\leq
\frac{73 C_0^2 e^{5\mu}}{\mu(1-e^{-\mu})} \norm{\Phi} ve^{ v t} \mu d(x-2, S_B)e^{ - \mu d(S_A, S_B)}.
\end{align}
Summing over $l\not = r$ and using \eqref{pf_prop_eq0} in \eqref{eq:derivative}, we obtain
\beq
\norm{ \frac{d}{ds} f^{jk}_{x}(s,t) } 
\leq 
 \frac{218 C_0^2 e^{5\mu}}{\mu  (1-e^{-\mu})} \binom{D}{2} \norm{\Phi} ve^{vt} \mu d(x-3,S_B) e^{-\mu d(S_A,S_B)}.  
\eeq
independently of $s$, $j,k$ and $\la_{x}$.
\end{proof}

\begin{proposition}\label{prop:singleterm}
Let $A, B \in \A_L$ and $\sig_\mc{F}\geq 2.$ Suppose $x\in \mc{F}$ and $\max S_A +3 < x < \min S_B -3$. Then, for all $t\in \rr$,
	\begin{equation}\label{eq:one-site-bound}
		\begin{split}
\norm{ [\tau^{H(\vec{\la})}_t(A), B ] }  \leq \frac{C  }{ |\la _{x}|\Gamma_x  }  \norm{A}\norm{B}G_1(t) \mu \min \set{  d( x-3, S_B), d(x+3, S_A)} e^{ - \mu d(S_A, S_B)} 
		\end{split}
	\end{equation}
where $C =\frac{ 444 C_0^2  e^{5\mu }}{\mu(1- e^{-\mu })}  \norm{\Phi}\binom{D}{2}^2$ and $G_1(t) = v |t| e^{ v|t|}$. 
\end{proposition}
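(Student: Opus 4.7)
The plan is to start from the integration-by-parts representation of Proposition \ref{prop:intbyparts} and bound the two contributions appearing on the right-hand side of \eqref{eq:integral}: the boundary term $f_x^{jk}(0,t)$ and the integral of $\tfrac{d}{ds} f_x^{jk}(s,t)$.

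For the boundary term, evaluating at $s=0$ gives $f_x^{jk}(0,t) = [[R_{jk}^{(x)}, \tau_t(A)], B]$. Since $S_{R_{jk}^{(x)}} \subseteq [x-1, x+1]$ and the hypothesis $x < \min S_B - 3$ forces $\min S_B \geq x+4$, the commutator $[R_{jk}^{(x)}, B]$ vanishes. Jacobi's identity then collapses $f_x^{jk}(0,t)$ to $-[[\tau_t(A), B], R_{jk}^{(x)}]$, so that $\|f_x^{jk}(0,t)\| \leq 2\|R_{jk}^{(x)}\|\cdot\|[\tau_t(A), B]\|$. Applying the \emph{a priori} bound Theorem \ref{thm:apLR} with $\Psi_L = V_L(\vec\la)$, together with $\|R_{jk}^{(x)}\| \leq 2\|\Phi\|$ and $e^{v|t|} - 1 \leq v|t|e^{v|t|}$, yields
$$\|f_x^{jk}(0,t)\| \leq 4 C_0 \|\Phi\|\,\|A\|\|B\| \, v|t|\, e^{v|t|}\, e^{-\mu d(S_A, S_B)}.$$

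The derivative integral is controlled directly by Lemma \ref{lem:derivative-bound}; since the bound there is $s$-independent, integrating over $s\in[0,|t|]$ produces exactly the factor $G_1(t) = v|t|e^{v|t|}$ paired with $\mu d(x-3, S_B)$. Summing both contributions over the $D(D-1) = 2\binom{D}{2}$ pairs $(j,k)$ with $j \neq k$ and dividing by $|\la_x|\Gamma_x$, the derivative contribution already matches the shape of \eqref{eq:one-site-bound}. To absorb the boundary contribution into the same shape, I would use that $\max S_A + 3 < x < \min S_B - 3$ and lattice integrality force $d(x-3,S_B) \geq 7$, so the constant $f_x^{jk}(0,t)$-bound can be rewritten as $\mu d(x-3, S_B)$ times a factor of order $1/\mu$. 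A direct accounting against the claimed $C = 444\,C_0^2 e^{5\mu} \|\Phi\|\binom{D}{2}^2/[\mu(1-e^{-\mu})]$ shows the derivative term contributes $2\cdot 218 = 436\binom{D}{2}^2$ of the budget, leaving a margin of $8\binom{D}{2}^2$ which comfortably absorbs the boundary piece since $C_0 \geq 1$, $e^{5\mu}\geq 1$, and $(1-e^{-\mu})^{-1} \geq 1$.

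The symmetric bound with $d(x+3, S_A)$ in place of $d(x-3, S_B)$ is obtained by rerunning the entire argument with the roles of $A$ and $B$ interchanged in the ``outer'' and ``inner'' positions inside Corollary \ref{lem:doublecommutatorone} (as used in the proof of Lemma \ref{lem:derivative-bound}), which propagates the spatial decay from $S_W \subseteq [x-2,x+2]$ to $S_A$ instead of to $S_B$; taking the smaller of the two bounds yields the minimum in \eqref{eq:one-site-bound}. The main obstacle is the bookkeeping of the numerical constants, ensuring the boundary-term remainder fits within the $8\binom{D}{2}^2$ slack left by the derivative contribution for all admissible $\mu$; all the structural content is already packaged in Proposition \ref{prop:intbyparts}, Lemma \ref{lem:derivative-bound}, and Theorem \ref{thm:apLR}.
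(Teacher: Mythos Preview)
Your proposal is correct and follows essentially the same approach as the paper: apply Proposition~\ref{prop:intbyparts}, bound the boundary term $f_x^{jk}(0,t)$ via Jacobi and Theorem~\ref{thm:apLR}, bound the derivative integral via Lemma~\ref{lem:derivative-bound}, and check that the constants fit within $C$. One minor simplification the paper makes that you might adopt: rather than rerunning the argument with $A$ and $B$ interchanged inside Corollary~\ref{lem:doublecommutatorone}, the paper invokes the identity $\|[\tau_t^{H(\vec\la)}(A),B]\| = \|[\tau_{-t}^{H(\vec\la)}(B),A]\|$ at the outset to assume without loss of generality that $d(x-3,S_B) \le d(x+3,S_A)$, which avoids having to mirror Proposition~\ref{prop:intbyparts} and Lemma~\ref{lem:derivative-bound}.
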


\begin{proof}
Since $\norm{ [\tau^{H(\vec{\la})}_t(A), B ] } = \norm{ [\tau^{H(\vec{\la})}_{-t}(B), A ] }$, the roles of $A$ and $B$ in the proof are symmetric and we may assume $\min \set{ d(x+3, S_A), d(x-3, S_B)} = d(x-3, S_B).$ Suppose $\norm{A}=\norm{B}=1$. Jacobi's identity \eqref{Jacobi} implies
	\begin{equation}
		\begin{split}
\| f^{jk}_{x}(0,t)\|  \leq 2 \|R_{jk}^{(x)}\| \|[\tau_t(A),B]\|  \leq 4 \|\Phi\| C_0 \norm{A}\norm{B} (e^{ v|t| }-1) e^{ - \mu d(S_A, S_B)}
		\end{split}\label{prop_pf_eq-2}
	\end{equation}
where we used Theorem \ref{thm:apLR} and $\|R_{jk}^{(x)} \|\leq 2 \|\Phi\|$ in the last inequality. Applying (\ref{prop_pf_eq-2}) and Lemma \ref{lem:derivative-bound} to the right-hand side of the inequality (\ref{eq:integral}) yields 
	\begin{equation}
		\begin{split}
\norm{ [\tau _t^{H(\vec{\la})}(A), B]}
& \leq \frac{2 \binom{D}{2}}{|\la_x| \Gamma_x} \bigg{(} 4 \norm{\Phi} C_0 (e^{v |t|}-1) e^{ - \mu d(S_A,S_B)} \\
& \hspace{35mm}+C_\mu \binom{D}{2} \norm{\Phi} v|t| e^{v|t|}  \mu d(x-3, S_B)e^{-\mu d(S_A,S_B)} \bigg{)} \\
& \leq  \frac{ 444 C_0^2  e^{5\mu }}{\mu(1- e^{-\mu })}  \norm{\Phi}\binom{D}{2}^2 v|t|e^{v|t|} \mu d(x-3, S_B) e^{ -\mu d(S_A,S_B)}    
		\end{split}
	\end{equation}
which is the bound in (\ref{eq:one-site-bound}).
\end{proof}

\subsection{Multiple impurities} \label{ss:induction} 
We recall the following real-valued functions from (\ref{eq:GFfunctions}):
	\begin{equation}\label{def:FG}
		\begin{split}
F_n (d)  = (\mu d)^n e^{ - \mu d}, \hspace{5mm} G_n( t)  = v|t| (1+v|t|)^{n-1} e^{v|t|}. 
		\end{split}
	\end{equation}
It is easy to see that $F_n$ is decreasing on $[ n/\mu, \infty)$ and that $G_n \leq G_{n+1}$ on $[0,\infty)$.

\begin{proof}[Proof of Theorem \ref{theorem:mainresult}] 
Let $\sig_\mc{F} \geq \min \set{ 2, 1/\mu}$ and $Z = \set{ x_1, \ldots, x_N}\subset \mathcal F$, ordered according to $x_1 < \cdots <x_N$. We prove by induction on $n\in \set{1,\ldots, N}$ the statement:

\begin{quote}
For all $A,B \in \A_L$ with $\max S_A +3 < x_1 < x_{n} < \min S_B - 3$,
and for all times $t\in \rr$
\begin{equation}\label{eq:induct}
		\begin{split}
\norm{ [\tau_t(A), B]} \leq \frac{ C^{n} }{\prod _{j=1}^{n} |\la_{x_j}| \Gamma_{x_j}} \norm{A}\norm{B} G_{n}(|t|) F_{n} (  d(S_A, S_B))
		\end{split}
\end{equation}
with $F_n,G_n$ defined above in \eqref{def:FG}. 
\end{quote}
Then Theorem \ref{theorem:mainresult} is proven as the case when $n=N$.  For $n=1$, \eqref{eq:induct} follows directly from Proposition \ref{prop:singleterm}.
Now suppose that \eqref{eq:induct} is correct for $n-1< N$ and all $D,E \in \A_L$ with $\max S_D +3 < x_1 \textnormal{ and } x_{n-1} < \min S_E - 3$,
and for all times $t\in \rr$. Let $A,B$ in $\mathcal A_L$ such that $\max S_A +3 < x_1 \textnormal{ and } x_{n} < \min S_B - 3$.  Without loss of generality, we assume $\norm{A} = \norm{B} = 1$ and $t>0$. We apply Proposition \ref{prop:intbyparts} in the case when $x= x_n$ to get
\begin{align}\label{eq:int-2}
\norm{[ \tau _t(A), B]} 
& \leq \frac{1}{|\la_{x_n}|\Gamma_{x_n}} 
\sum _{ \substack{j,k\in \set{1,\ldots,D} \\ j\not = k} } \bigg{(} \norm{ f_{x_n}^{jk}(0,t)} 
+ \int _0^t ds ~ \norm{\frac{d}{ds} f_{x_n}^{jk}(s,t) } \bigg{)}. 
\end{align}
As in bound \eqref{prop_pf_eq-2} in the proof of Proposition \ref{prop:singleterm}, we estimate 
\begin{align}\label{pf3.1_eq1}
\norm{ f_{x_n}^{jk}(0,t)}  
& \leq 2 \|R_{jk}^{(x)}\| \|[\tau_t(A),B]\|\notag\\
& \leq \frac{4 \norm{\Phi}}{\mu} \frac{C^{n-1}}{ \prod _{j=1}^{n-1} |\la_{x_j}| \Gamma_{x_j}} G _{n-1}(t) F_{n} ( d(S_A, S_B)) 
\end{align}
where the last inequality follows from the induction hypothesis and $\|R_{jk}^{(x)}\|\leq 2\|\Phi\|$.
We proceed as in inequalities \eqref{eq-1}, \eqref{eq:7} in the proof of Lemma \ref{lem:derivative-bound} to bound
\begin{align}\label{pf3.1_eq2}
 \norm{\frac{d}{ds} f_{x_n}^{jk}(s,t) } 
& \leq  \norm{ [ [ [\hat{H}_{x_n}, R_{jk}^{(x_n)}], \tau_{t-s}(A)], \hat{\tau}^{x_n}_{-s}(B)]}  \notag\\
& \hspace{30mm} + 2\sum _{ l \not = r}  \norm{R_{jk} ^{(x_n)}} \norm{[ [R_{lr}^{(x_n)} , \tau_{t-s}(A)], \hat{\tau}^{x_n}_{-s}(B)] } \notag\\
& \hspace{30mm} + 2\sum_{l\not = r}  \norm{ [R_{jk}^{(x_n)} , \hat{\tau}_{-s}^{x_n}(B)]}\norm{ [ R_{lr}^{(x_n)} , \tau_{t-s}(A)]}.
	\end{align}
Next we estimate the three terms on the right hand side of the above inequality individually. First we deal with $\norm{ [ [ [\hat{H}_{x_n}, R_{jk}^{(x_n)}], \tau_{t-s}(A)], \hat{\tau}^{x_n}_{-s}(B)]}$. We set $W= [ \hat{H}_{x_n}, R_{jk}^{(x_n)}]$. Since $\sig_\mc{F} \geq 1/\mu$, $\max S_A+3<x-1$ and $S_W \subset [x_n-2,x_n+2]$, we obtain $\min S_W -1> \max S_A +  (n-1)/\mu$. And $F_{n-1}$ restricted to $[(n-1)/\mu, \infty)$ is a monotone decreasing function. So we apply Lemma \ref{lem:doublecommutatortwo} to $W$ using $k = \max \set{3, (n-1)/\mu}$, $g = G_{n-1}$, $f=F_{n-1}$ and the commutator bound in $(\ref{eq:induct})$ as the assumed commutator bound in (\ref{eq:specialLR})       to get
	\begin{equation}\label{pf3.1_eq3}
		\begin{split}
\norm{  [ [W , \tau_{t-s}(A)], \hat{\tau}_{-s}^{x_n}(B)]   } \leq  \bigg{(}24 C_0\frac{e^{\mu}}{1-e^{-\mu}} \bigg{)} C_*\norm{W} G_{n-1}(t) e^{ v|s|} h_\mu( S_A, S_W, S_B)  
		\end{split}
	\end{equation}
where $h_\mu$ is defined as in (\ref{eq:tildef}) and $C_* =  \frac{ C ^{n-1} }{\prod _{j=1}^{n-1} |\la_{x_j}| \Gamma_{x_j}}  $. 
Furthermore, with these choices and the facts that $\mu d(S_A, S_W) \geq 1$ and $ d(S_A,S_W)+\diam(S_W) +d(S_W,S_B) = d(S_A,S_B)$,
	\begin{equation}
		\begin{split}
h_\mu \big{(} S_A, S_{ W }, S_B \big{)} \leq  \frac{3e^{ 5 \mu}}{\mu} (\mu d(S_A,S_B))^{n}e^{ -\mu d(S_A,S_B)} = \frac{3e^{ 5 \mu}}{\mu} F_n (d(S_A,S_B)).
		\end{split}
	\end{equation}
And so, using the fact that $\|[\hat{H}_{x_n}, R_{jk}^{(x_n)}] \| \leq 6 \|\Phi\|^2$, we insert this in \eqref{pf3.1_eq3} to yield
	\begin{equation}\label{pf3.1_eq4}
		\begin{split}
& \norm{  [ [ [\hat{H}_{x_n}, R_{jk}^{(x_n)}] , \tau_{t-s}(A)], \hat{\tau}_{-s}^{x_n}(B)]   }  \\ 
&\leq  432 \, C_0 \frac{e^{6\mu}}{\mu(1-e^{-\mu})} \frac{C^{n-1} }{\prod _{j=1}^{n-1} |\la_{x_j}| \Gamma_{x_j}} \norm{\Phi}^2 G_{n-1} (t-s) e^{v |s|} F_n( d(S_A, S_B))
		\end{split}
	\end{equation}
independently of $j,k\in \{1,..,D\}$ with $j\neq k$. 

Secondly, we bound	 $\norm{[ [R_{lr}^{(x_n)} , \tau_{t-s}(A)], \hat{\tau}^{x_n}_{-s}(B)] }$ . Choosing $W = R_{lr}^{(x_n)}$ and recalling $S_{R_{lr}^{(x_n)}} = [x_n-1,x_n+1]$ and $\|R_{lr}^{(x_n)}\|\leq 2\|\Phi\|$, we obtain along the very same lines as above
\begin{align}\label{pf3.1_eq5}
&\norm{[ [R_{lr}^{(x_n)} , \tau_{t-s}(A)], \hat{\tau}^{x_n}_{-s}(B)] } \notag\\
&\leq  144 \, C_0 \frac{e^{6\mu}}{\mu(1-e^{-\mu})} \frac{C^{n-1} }{\prod _{j=1}^{n-1} |\la_{x_j}| \Gamma_{x_j}} \norm{\Phi} G_{n-1} (t-s) e^{v |s|} F_n( d(S_A, S_B))
\end{align}
independently of $j,k,l,r\in \{1,..,D\}$.

Thirdly, we estimate $\norm{ [R_{jk}^{(x_n)} , \hat{\tau}_{-s}^{x_n}(B)]}\norm{ [ R_{lr}^{(x_n)} , \tau_{t-s}(A)]}$. To do so, we use $S_{R_{jk}^{(x_n)} }, S_{R_{lr}^{(x_n)}} = [x_n-1,x_n+1]$, $\|R_{lr}^{(x_n)}\|,\|R_{jk}^{(x_n)}\|\leq 2 \|\Phi\|$ and apply the induction hypothesis to $\|[R_{lr}^{(x_n)},\tau_{t-s}(A)]\|$ and Theorem \ref{thm:apLR} to $\norm{ [R_{jk}^{(x_n)} , \hat{\tau}_{-s}^{x_n}(B)]}$. This results in
\begin{align}\label{pf3.1_eq6}
&\norm{ [R_{jk}^{(x_n)} , \hat{\tau}_{-s}^{x_n}(B)]} \norm{ [ R_{lr}^{(x_n)} , \tau_{t-s}(A)]}\notag\\
&\leq 
4 \|\Phi\|^2 C_0 \frac{C^{n-1} }{\prod _{j=1}^{n-1} |\la_{x_j}| \Gamma_{x_j}}  G_{n-1}(t-s)  F_{n-1} (d(S_A,x_n)-1))
(e^{v|s|} -1) e^{-\mu d(x_{n}+1,S_B)}\notag\\
&\leq
4 \|\Phi\|^2 C_0 \frac{e^{2 \mu}}\mu \frac{C^{n-1} }{\prod _{j=1}^{n-1} |\la_{x_j}| \Gamma_{x_j}}  G_{n-1}(t-s) e^{v|s|}  F_{n} (d(S_A,S_B)) 
\end{align}
independently of $j,k,l,r\in\{1,...,D\}$. 

Inserting the bounds \eqref{pf3.1_eq4}, \eqref{pf3.1_eq5} and \eqref{pf3.1_eq6} into \eqref{pf3.1_eq2}, yields 
\begin{align}
 \norm{\frac{d}{ds} f_{x_n}^{jk}(s,t) } 
\leq 
&\frac{218 C_0e^{5\mu}}{\mu(1-e^{-\mu})} \binom{D}{2} \norm{\Phi} \frac{C^{n-1} }{\prod _{j=1}^{n-1} |\la_{x_j}| \Gamma_{x_j}} 
F_{n} (d(S_A,S_B)) v G_{n-1}(t-s) e^{v|s|}.
\end{align}
Further plugging this and \eqref{pf3.1_eq1} in \eqref{eq:int-2}, we end up with 
\begin{align}\label{pf3.1_eq10}
\norm{ [\tau_t(A), B]} 
\leq 
\frac{444 C_0 e^{5\mu}}{\mu(1-e^{-\mu})} &\norm{\Phi} \frac{C^{n-1} }{\prod _{j=1}^{n} |\la_{x_j}| \Gamma_{x_j}} \binom{D}{2}^2
F_{n} (d(S_A,S_B)) \notag\\
&\times 
\Big(  G_{n-1}(t)
+  v\int_0^t \d s\, G_{n-1}(t-s) e^{v|s|}.
\Big)
\end{align}

Using the definition of $G_{n-1}(t)$ and $0\leq t-s \leq t$, we see
	\begin{equation}\label{eq:Gbound2}
		\begin{split}
G_{n-1}(t)+ v \int _0^t ds ~ G_{n-1} ( t-s) e^{v s} \leq  vte^{ vt}  \sum _{j=0}^{n-1} \binom{n-1}{j} (vt)^j = G_{n}(t).
		\end{split}
	\end{equation}
Inserting this in \eqref{pf3.1_eq10} 
proves \eqref{eq:induct}. Finally, the statement \eqref{eq:induct} with $n=N$ gives Theorem \ref{theorem:mainresult}.  
 \end{proof}
\subsection{Thermodynamic limit}\label{ssec:thermo} We observe that the constants which have appeared so far do not depend on $L$. This implies that the statements in Proposition \ref{prop:singleterm} and Theorem \ref{theorem:mainresult} hold with $\tau^{H_L(\vec{\la})}$ replaced with the thermodynamic limit $\tau: \rr\to \Aut(\A)$ defined by pointwise limit
	\begin{equation}
		\begin{split}
\tau _t (A) = \lim _{L \to \infty} \tau^{ H_L(\vec{\la})}_t(A).
		\end{split}
	\end{equation}
	
	\subsection{Disordered spin chain}\label{sec:heavy-tail} 
We now apply our results to a specific example to show that in 1D, the presence of a sparse disordered field can imply that for fixed $t$, with high probability, the Lieb-Robinson bound from Theorem \ref{thm:apLR} is not sharp. Let $\mu>0$ be fixed, and take $\mc{F} = \sig \zz$, where $\sig = \lceil \max \set{ 1/\mu, 2} \rceil.$ Consider the  Heisenberg spin $S=1/2$ chain with sparse transverse field and open boundary conditions on $[-L,L]$ for $L\in\N$
	\begin{equation}\label{eq:heisen-ham}
		\begin{split}
H_L(\vec{\la}) =-J \sum _{n=-L}^{L-1} \sum _{j=1}^3 \sig^j _n \sig^j_{n+1}  + \sum _{x\in \mc{F} \cap [-L,L]} \la_x \sig^3_x
		\end{split}
	\end{equation}
where $\sigma^j$, $j=1,...,3$ are the standard Pauli matrices and $J>0$.	
Let $\mu>0$ be fixed. Then Theorem \ref{thm:apLR} gives constants $C_0$ and $v$ such that
\beq
\norm{ [\tau^{H_L(\Phi) + \Psi_L }_t(A), B]} \leq C_0 \norm{A} \norm{B} e^{ v |t|} e^{-\mu d(S_A, S_B)}
\eeq
for all $A, B \in \A_L$ and times $t\in \rr$.

We want to improve this bound by making the couplings $\lambda_x$ randomly chosen from a heavy-tailed distribution. At each $x$, let $\la_x \in [1,\infty)$ be drawn from the long-range distribution given by the density $m_a(r) =  \frac{a}{r^{1+a}} $, $r\in[1,\infty)$ for some  $0<a<1/2$. Since $\mc{F}$ is countably infinite and uniformly spaced, we can prove in this situation  the following large deviation bound: For any $b\in (a,1)$ and $\epsilon>0$, there exists $L_0\in \nn$ and $c>0$ such that for all $L\geq L_0$ 
\beq\label{eq:prob}
\m{P} \Big( |\set{ x\in\mathcal F \cap [-L-3,L+3] : \la_x \geq \epsilon (2L+1) } | \geq (2L+1)^{1-b}  \Big) \geq 1- e^{ - c \epsilon^{-a} (2L+1)^{1-a}}, 
\eeq
see e.g. \cite{Hoeffding-prob}. Equation (\ref{eq:prob}) does not depend on the precise form of the density $m_a$ but only on its tail.

Suppose $A \in \A_{\set{-L}}$, $B\in \A_{\set{L}}$. Setting $\epsilon = C  (1+v|t|) (2L+1)$, Theorem \ref{theorem:mainresult} implies the following bound.

\begin{proposition}
Let $H_L(\vec{\la})$ be the Heisenberg spin chain with random transverse field defined in (\ref{eq:heisen-ham}). Then, for all $A\in \A_{\set{-L}}$, $B\in \A_{\set{L}}$,
\beq\label{eq:com-prob}
\norm{ [ \tau_t ^{H_L(\vec{\la})}(A) , B]} \leq  \norm{A}\norm{B} e^{ v |t|} e^{- 2\mu L } e^{- (2L+1)^{1-b} \ln ( 2L+1)}
\eeq
with probability $1- e^{ - c \epsilon^{-a} (2L+1)^{1-a}}$ as derived  in (\ref{eq:prob}).	

\end{proposition}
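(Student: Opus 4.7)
The strategy is to apply Theorem \ref{theorem:mainresult} only to a subset of the impurities — those whose random couplings are atypically large — and to absorb the remaining (sub-threshold) impurities into an augmented unperturbed Hamiltonian. This rewriting is free of charge because Theorem \ref{thm:apLR} permits arbitrary single-site perturbations $\Psi_L$ without affecting the a priori constants $C_0$ and $v$, and because the constant $C$ in Theorem \ref{theorem:mainresult} depends only on $\mu$, $D$, and $\norm{\Phi}$, not on the impurity structure.

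Concretely, I would set the threshold $T=\epsilon(2L+1)=C(1+v|t|)(2L+1)^2$ and split $\mc{F}=\mc{F}_g\cup\mc{F}_b$ with $\mc{F}_g=\{x\in\mc{F}:\la_x\geq T\}$. Then rewrite
\[
H_L(\vec\la)=\Bigl(H_L+\sum_{x\in\mc{F}_b\cap[-L,L]}\la_x V_x\Bigr)+\sum_{x\in\mc{F}_g\cap[-L,L]}\la_x V_x,
\]
and apply Theorem \ref{theorem:mainresult} with the reduced impurity set $\mc{F}_g$, treating the parenthesized term as the unperturbed Hamiltonian. Since $\mc{F}_g\subset\mc{F}$, the minimum-spacing condition $\sigma_{\mc{F}_g}\geq \max\{1/\mu,2\}$ is inherited automatically. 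With $S_A=\{-L\}$, $S_B=\{L\}$, $d(S_A,S_B)=2L$, $D=2$, and $\Gamma_x=2$ (the spectral gap of $\sigma^3$), Theorem \ref{theorem:mainresult} yields
\[
\norm{[\tau_t^{H_L(\vec\la)}(A),B]}\leq\frac{C^{\tilde N}}{\prod_{x\in\tilde Z}2\la_x}\norm{A}\norm{B}\,G_{\tilde N}(t)\,F_{\tilde N}(2L),
\]
where $\tilde Z=[-L+3,L-3]\cap\mc{F}_g$ and $\tilde N=|\tilde Z|$.

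Next I would estimate $\tilde N$ and the remaining product. On the event in (\ref{eq:prob}), at least $(2L+1)^{1-b}$ good impurities lie in $\mc{F}\cap[-L-3,L+3]$; at most $O(1)$ of them can fall in the boundary regions $[-L-3,-L+3]$ or $[L-3,L+3]$, so $\tilde N\geq(2L+1)^{1-b}-O(1)$. Using $\la_x\geq T$ for each $x\in\tilde Z$ and regrouping the $\tilde N$ factors hidden inside $C^{\tilde N}G_{\tilde N}(t)F_{\tilde N}(2L)$, each impurity contributes a factor
\[
\frac{C(1+v|t|)(2\mu L)}{2T}=\frac{\mu L}{(2L+1)^2}\leq\frac{\mu}{2(2L+1)}.
\]
Pulling out $e^{v|t|}e^{-2\mu L}$ together with the harmless prefactor $v|t|/(1+v|t|)\leq1$, the bound collapses to $\norm{A}\norm{B}\,e^{v|t|}e^{-2\mu L}\,(\mu/(2(2L+1)))^{\tilde N}$, which for $L$ sufficiently large is dominated by $e^{v|t|}e^{-2\mu L}e^{-(2L+1)^{1-b}\ln(2L+1)}$.

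The main obstacle is not a deep one but a careful piece of calibration: the threshold $T$ must be chosen so that the per-impurity shrinkage is exactly of order $1/(2L+1)$, which is precisely what $T\propto(1+v|t|)(2L+1)^2$ accomplishes given that $G_{\tilde N}F_{\tilde N}$ inflates by a factor $(1+v|t|)(2\mu L)$ per impurity. The $O(1)$ boundary loss in $\tilde N$, the leftover constant $\mu/2$, and the $v|t|/(1+v|t|)$ factor are all absorbed by taking $L$ large (possibly at the cost of adjusting the constant $c$ in the exponent of the probability bound), which preserves the clean form of the asserted decay.
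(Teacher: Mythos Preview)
Your proposal is correct and matches the paper's approach. The paper gives essentially no proof beyond the single sentence ``Setting $\epsilon = C(1+v|t|)(2L+1)$, Theorem~\ref{theorem:mainresult} implies the following bound,'' and you have filled in precisely the details that sentence presupposes: isolating the sub-threshold impurities as on-site terms absorbed by Theorem~\ref{thm:apLR}, applying Theorem~\ref{theorem:mainresult} only to the $\tilde N \geq (2L+1)^{1-b}-O(1)$ large ones, and verifying that the per-impurity factor $C(1+v|t|)(2\mu L)/(2T)=\mu L/(2L+1)^2$ produces the claimed extra decay.
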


When $L$ is sufficiently large so that $\exp(-(2L+1)^{1-b}\ln(2L+1))<C_0$, the commutator bound in (\ref{eq:com-prob}) is strictly sharper than the bound from Theorem \ref{thm:apLR}. 

\section{Appendix: Double commutator bound}
In this appendix we prove the double commutator bound which we use in the proof of Theorem \ref{theorem:mainresult}. Our proof is a straightforward argument which we include for completeness. Let $\Phi_1$ and $\Phi_2$ denote two interactions such that  $\Phi_i(X)  \not  = 0$ only for $X\subset \Z$ with $\diam(X) = \max\{|x-y|:\ x,y\in X\} \leq 1$ and
	\begin{equation}
\norm{ \Phi_i } = \sup _{x} \norm{ \Phi_i(\set{x,x+1})}
< \infty
\end{equation} 
for $i=1,2$.
We assume that $\norm{\Phi_1} = \norm{\Phi_2} = \norm{\Phi}$, and we note that $\Phi_1$ and $\Phi_2$ are nearest neighbor interactions where arbitrary large on-site terms are added. Let $\tau_t^{(1)}$ and $\tau_t^{(2)}$ be the Heisenberg dynamics on $[-L,L]$ generated by $\Phi_1$ and $\Phi_2$, respectively.

Let $\rho_x: \A_{\set{x}} \to \cc$ denote the normalized trace $ \rho_x(A)= \frac{1}{D} \textnormal{tr}(A)$, and for any finite subset $Z$, let $\id_Z$ denote the identity map on $\A_Z$. Then for any finite set $X \subset [-L,L]$ we define 
$\m{E}_X = \id _{X} \otimes \bigotimes _{y \in [-L,L] \setminus X} \rho_y$, which has the following approximation property.

\begin{lemma}{\cite[Cor. 3.1]{PhysRevLett.97.050401,NSW}}\label{lem:extension}
Suppose $A \in \A_L$, and suppose there exist $X \subset [-L,L]$ and $\ep>0$ such that for all $B \in \A_{[-L,L] \setminus X}$, 
	\begin{equation}
		\begin{split}
\norm{[A,B]} \leq \ep \norm{A}\norm{B}.
		\end{split}
	\end{equation}
Then $\norm{ (\id_{[-L,L]}- \m{E}_X)(A)} \leq \ep \norm{A}$. 
\end{lemma}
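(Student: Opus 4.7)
The plan is to realize $\m{E}_X$ as an average of unitary conjugations by elements of $\A_{Y}$, where $Y=[-L,L]\setminus X$, and then use the hypothesis on $A$ to bound the resulting commutators uniformly. The key algebraic input is the Haar-integral representation of the single-site trace state: for $B\in M_D(\cc)$, Schur's lemma yields
\[
\rho_y(B)\,\1 \;=\; \int_{U(D)} u\, B\, u^*\, du,
\]
where $du$ is the normalized Haar measure on $U(D)$. Tensoring this identity over $y\in Y$ and applying Fubini, I obtain
\[
\m{E}_X(A) \;=\; \int_{\mathcal U(\A_Y)} U\, A\, U^*\, dU,
\]
where each unitary $U\in \mathcal U(\A_Y)$ is identified with $\1_X\otimes U\in \A_{[-L,L]}$ and the integration is the product Haar integral over $\prod_{y\in Y}U(D)$, equivalently the Haar integral on the full unitary group of $\A_Y$.

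Given this representation, I would use the algebraic identity $A - UAU^* = -[U,A]\,U^*$ together with $\norm{U}=1$ to estimate
\[
\norm{A - \m{E}_X(A)} \;=\; \Big\|\int_{\mathcal U(\A_Y)} [U,A]\,U^*\, dU\Big\| \;\leq\; \sup_{U\in\mathcal U(\A_Y)}\norm{[U,A]} \;\leq\; \ep\norm{A},
\]
the final inequality being the hypothesis applied to $B = U\in \A_Y$. The one point requiring care, and what I view as the main (and only mild) technical obstacle, is justifying the Haar-integral representation of $\m{E}_X$ on the tensor product: in the single-site case this is immediate from Schur's lemma, and for the full product one can either iterate the formula site by site, or observe that the right-hand side of the displayed formula is a unital, completely positive projection onto $\A_X$ that preserves the normalized tracial state, and hence coincides with $\m{E}_X$ by the uniqueness of such a conditional expectation.
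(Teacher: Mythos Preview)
Your proposal is correct. The paper does not supply its own proof of this lemma; it is quoted from the literature (Bravyi--Hastings--Verstraete and Nachtergaele--Scholz--Werner), and the argument you give---representing the conditional expectation $\m{E}_X$ as a Haar average of unitary conjugations over $\A_{[-L,L]\setminus X}$ and then bounding $A-\m{E}_X(A)=\int [A,U]\,U^*\,dU$ via the hypothesis---is precisely the standard proof found in those references.
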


Lastly, we denote
	\begin{equation}
		\begin{split}
\m{E}_r = \m{E} _{[-L, r)} .
		\end{split}
	\end{equation}
for any $r\in (-L, L]$.

\begin{lemma}\label{lem:doublecommutatortwo}
Let $A, B \in \A_L$ such that $\max S_A <  \min S_B$. Suppose there exists $k\geq 0$ such that for all $T \in \A_L$ with $\min S_T > \max S_A +k$, 
	\begin{equation}\label{eq:specialLR}
		\begin{split}
\norm{ [\tau^{(1)}_t(A) , T]} \leq C_* \norm{A}\norm{T}  g(t) f( d(S_A, S_T))
		\end{split}
	\end{equation}
for some constant $C_*>0$ and $g,f$ real-valued functions, where $f$ is monotone decreasing. Then, for all $W \in \A_L$ such that
	\begin{equation}
		\begin{split}
\max S_A + k < \min S_W -1 \leq \max S_W < \min S_B
		\end{split}
	\end{equation}
we have 
	\begin{equation}
		\begin{split}
& \norm{ [[ W, \tau_t^{(1)}(A)], \tau_s^{(2)}(B)]} \leq  \bigg{(}24 C_0\frac{e^{\mu}}{1-e^{-\mu}} \bigg{)} C_* \norm{A} \norm{B} \norm{W} g(t) e^{ v|s|} h_\mu( S_A, S_W, S_B) 
		\end{split}
	\end{equation}
where
	\begin{equation}\label{eq:tildef}
		\begin{split}
& h_\mu (S_A, S_W, S_B) = f( d(S_A, S_B)) + f( d(S_A,S_W)-1) e^{- \mu d(S_W, S_B)} \\
& \hspace{55mm} +  \sum _{m=1}^{D_W} f(   d(S_A, S_W) + m -2) e^{ - \mu (D_W - m) }
		\end{split}
	\end{equation}
and $D_W = d(S_W, S_B) + \diam{S_W} + 1$.
\end{lemma}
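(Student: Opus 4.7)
The plan is to bound $\|[X, \tau_s^{(2)}(B)]\|$ for $X := [W, \tau_t^{(1)}(A)]$ by decomposing $X$ via the conditional expectations $\mathbb{E}_r$ and estimating each summand separately; the three terms of $h_\mu$ will arise from three distinct pieces. I start from the telescoping identity
\[
X \;=\; (\mathrm{id} - \mathbb{E}_{\min S_B})(X) \;+\; \mathbb{E}_{\max S_W + 1}(X) \;+\; \sum_{r = \max S_W + 1}^{\min S_B - 1}\bigl(\mathbb{E}_{r+1} - \mathbb{E}_r\bigr)(X).
\]

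For any $r > \max S_W$ and any $T \in \mathcal{A}_{[r,L]}$, Jacobi's identity together with $[T,W]=0$ gives $\|[X,T]\| \le 2\|W\|\,\|[\tau_t^{(1)}(A),T]\|$, which by the hypothesis \eqref{eq:specialLR} is at most $2\|W\| C_* \|A\|\|T\|\, g(t)\, f(r - \max S_A)$ (admissible since $r > \max S_W \ge \min S_W > \max S_A + k$). Lemma~\ref{lem:extension} then yields $\|(\mathrm{id} - \mathbb{E}_r)(X)\| \le 2\|W\| C_* \|A\|\, g(t)\, f(r - \max S_A)$. Taking $r = \min S_B$ and using the trivial bound $\|[\,\cdot\,,\tau_s^{(2)}(B)]\| \le 2\|B\|\|\cdot\|$ produces the first summand $f(d(S_A, S_B))$ of $h_\mu$. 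For each $r \in [\max S_W + 1, \min S_B - 1]$, the triangle inequality $\|(\mathbb{E}_{r+1} - \mathbb{E}_r)(X)\| \le \|(\mathrm{id}-\mathbb{E}_r)(X)\| + \|(\mathrm{id}-\mathbb{E}_{r+1})(X)\|$ gives $\le 4\|W\| C_* \|A\|\, g(t)\, f(r - \max S_A)$; combining with Theorem~\ref{thm:apLR} at distance $\min S_B - r$ and reindexing via $m = r - \min S_W + 2$ (so $r-\max S_A = d(S_A,S_W)+m-2$ and $\min S_B - r = D_W - m + 1$) reproduces a subsum of the final sum in $h_\mu$.

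The crucial middle piece $\mathbb{E}_{\max S_W + 1}(X)$ requires a finer argument. The conditional expectation $\mathbb{E}_{\max S_W+1}$ keeps the sites $[-L, \max S_W] \supset S_W$, so it commutes with left- and right-multiplication by $W$, giving $\mathbb{E}_{\max S_W+1}(X) = [W, \mathbb{E}_{\max S_W+1}(\tau_t^{(1)}(A))]$. Moreover, $\mathbb{E}_{\min S_W-1}(\tau_t^{(1)}(A)) \in \mathcal{A}_{[-L, \min S_W - 2]}$ commutes with $W$ by disjoint support, and $\mathbb{E}_{\max S_W+1}\circ \mathbb{E}_{\min S_W-1} = \mathbb{E}_{\min S_W-1}$, so I may freely subtract it inside the commutator:
\[
\mathbb{E}_{\max S_W+1}(X) \;=\; \bigl[W,\; \mathbb{E}_{\max S_W+1}\bigl((\mathrm{id}-\mathbb{E}_{\min S_W-1})(\tau_t^{(1)}(A))\bigr)\bigr].
\]
Applying the hypothesis \eqref{eq:specialLR} with Lemma~\ref{lem:extension} to $\tau_t^{(1)}(A)$ with keep-set $[-L, \min S_W - 2]$ (admissible since $\min S_W - 1 > \max S_A + k$) gives $\|(\mathrm{id}-\mathbb{E}_{\min S_W-1})(\tau_t^{(1)}(A))\| \le C_*\|A\|\,g(t)\,f(d(S_A,S_W)-1)$ -- this is the source of the ``$-1$'' shift in $h_\mu$. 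Hence $\|\mathbb{E}_{\max S_W+1}(X)\| \le 2\|W\| C_*\|A\|\,g(t)\,f(d(S_A,S_W)-1)$, and Theorem~\ref{thm:apLR} at distance $d(S_W, S_B) = \min S_B - \max S_W$ reproduces the middle term of $h_\mu$ exactly.

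Summing the three contributions and collecting constants gives the stated bound; the factor $1/(1-e^{-\mu})$ appears when bookkeeping the geometric tail in the telescope, and the single $e^{\mu}$ arises from the shift $\min S_B - r = D_W - m + 1$. The only non-routine step is the treatment of the middle piece: getting the sharper $f(d(S_A,S_W)-1)$ in place of the naive $f(d(S_A,S_W))$ or the crude $\|X\| \le 2\|W\|\|A\|$ is essential, and it hinges on commuting $\mathbb{E}_{\max S_W+1}$ past $W$ and subtracting the disjointly supported $\mathbb{E}_{\min S_W-1}(\tau_t^{(1)}(A))$ — everything else is careful Jacobi bookkeeping.
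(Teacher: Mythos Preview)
Your argument is correct, but it takes a genuinely different route from the paper. The paper first splits off $(\id-\m{E}_b)(X)$ (as you do), but then handles $\m{E}_b(X)=[W,\m{E}_b(\id-\m{E}_w)\tau_t^{(1)}(A)]$ by applying Jacobi's identity to peel off the cross term $[\tau_s^{(2)}(B),W]$, and for the remaining piece decomposes \emph{both} sides: $\tau_s^{(2)}(B)=\sum_n B(s,n)$ via the standard Lieb--Robinson layering, and $\m{E}_b(\id-\m{E}_w)\tau_t^{(1)}(A)=\sum_{m=1}^{D_W}\Delta_m\tau_t^{(1)}(A)$ via the telescope $\Delta_m=\m{E}_{w+m}-\m{E}_{w+m-1}$. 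The inner geometric sum over $n$ is what produces the factor $(1-e^{-\mu})^{-1}$ in the paper's constant.

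By contrast, you telescope $X$ itself between $\m{E}_{\max S_W+1}$ and $\m{E}_{\min S_B}$ and apply Theorem~\ref{thm:apLR} once to each piece, avoiding the $B$-side decomposition and the Jacobi split entirely. This is more elementary and in fact yields a \emph{tighter} estimate: your third contribution is only a subsum of $\sum_{m=1}^{D_W}$ (your index $m$ starts at $\diam S_W+3$), and the shift $\min S_B-r=D_W-m+1$ contributes an extra $e^{-\mu}$, not $e^{\mu}$. In particular your closing remark that ``the factor $1/(1-e^{-\mu})$ appears when bookkeeping the geometric tail'' is not quite right --- there is no geometric tail in your argument, and that factor is simply absorbed into the slack between your bound and the stated one. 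This does not affect correctness: each of your three contributions is bounded above by the corresponding term in $h_\mu$ with the stated prefactor $24C_0 e^{\mu}/(1-e^{-\mu})$, so the lemma follows.
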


\begin{proof}
Denote $\id = \id_{[-L,L]}.$ Without loss of generality, we suppose $\norm{A}=\norm{B}=1$ and we denote $b = \min S_B$. For any $X\subset [-L,L]$ and $P,R \in \A_X$ we obtain $\m{E}_X(PQR) = P \m{E}_X (Q) R$ for $Q \in \A_L$. Since by assumption $W \in \ran(\m{E}_b)$, we use the latter with $X = [-L, b)$, $Q=\tau_t^{(1)}(A)$ and $P,R$ to be $\1$ and $W$ alternatingly to get 
	\begin{equation}
		\begin{split}
\m{E}_b ( [W, \tau_t^{(1)}(A)]) = [ W , \m{E} _ b (\tau_t^{(1)}(A))].
		\end{split}
	\end{equation} 
Hence 
	\begin{equation}\label{eq:diff}
		\begin{split}
\norm{ [ (\id - \m{E}_b) ( [ W, \tau_t^{(1)} (A)]), \tau_s^{(2)}(B)]} & = \norm{ [   [ W, (\id - \m{E}_b)(\tau_t^{(1)} (A))], \tau_s^{(2)}(B)]} \\
& \leq 4 \norm{W} \norm{ (\id - \m{E}_b) (\tau_t^{(1)}(A))}.
		\end{split}
	\end{equation}
Next we bound $\norm{ [ \m{E}_b ([ W, \tau_t^{(1)} (A)]), \tau_s^{(2)}(B)] }.$ We set $w = \min S_W-1$ and note
	\begin{equation}
		\begin{split}
[W, \tau_t^{(1)}(A)] = [ W , (\id  - \m{E}_w)(\tau_t^{(1)}(A))]
		\end{split}
	\end{equation}
which implies
\begin{align}\label{eq:diff2}
\norm{ [ \m{E}_b ([ W, \tau_t^{(1)} (A)]), \tau_s^{(2)}(B)] } 
&=
\norm{ [ \m{E}_b ([ W, (\id  - \m{E}_w)(\tau_t^{(1)} (A))]), \tau_s^{(2)}(B)] } \notag\\
&= 
\norm{[ [W, \m{E}_b (\id  - \m{E}_w) (\tau_t^{(1)}(A))], \tau_s^{(2)}(B)]}.
\end{align}
Using Jacobi's identity \eqref{Jacobi} we obtain 
	\begin{equation}\label{eq:diff3}
		\begin{split}
 \norm{[ [W, \m{E}_b (\id  - \m{E}_w) (\tau_t^{(1)}(A))], \tau_s^{(2)}(B)]}
 & \leq \norm{[ \m{E}_b (\id  - \m{E}_w)(\tau_t^{(1)}(A)), [\tau_s^{(2)}(B), W]]}  \\
 & \hspace{15mm} + \norm{[W, [\m{E}_b(\id  - \m{E}_w)(\tau_t^{(1)}(A)), \tau_s^{(2)}(B)]]}.
		\end{split}
	\end{equation}
We first treat the term $T = \norm{[W, [\m{E}_b(\id  - \m{E}_w)(\tau_t^{(1)}(A)), \tau_s^{(2)}(B)]]}$. By setting $X = \set{ x: d(x, S_B) \leq n}$ in Lemma \ref{lem:extension}, we decompose $\tau_s^{(2)}(B) = \sum _{n=1}^\infty B(s,n)$ such that each $B(s,n) \in \A_{S_B(n)}$ and
	\begin{equation}\label{eq:b-decay}
		\begin{split}
\norm{ B(s,n)} \leq (2 C_0e^\mu )  e^{ v |s|} e^{ - \mu n}.
		\end{split}
	\end{equation}
Substituting this into $T$, yields
	\begin{equation}\label{eq:diff4}
		\begin{split}
T& \leq 2 \norm{W} \sum _{m=1}^{D_W} ~\sum _{n \geq D_W-m} \norm{ [ \Delta_m(\tau_t^{(1)}(A)) , B(s,n)]}
		\end{split}
	\end{equation}
where $D_W = d(S_W, S_B) + \diam{S_W}+1$ and
	\begin{equation}
		\begin{split}
\Delta_m (\tau_t^{(1)}(A)) =  \bigg{(} \m{E}_{ w+m} - \m{E}_{w+m-1}\bigg{)}( \tau_t^{(1)}(A)).
		\end{split}
	\end{equation}
Hence \eqref{eq:diff}, \eqref{eq:diff2}, \eqref{eq:diff3} and \eqref{eq:diff4} give
	\begin{align}\label{eq:diff5}
 \norm{ [[ W, \tau_t^{(1)}(A)], \tau_s^{(2)}(B)]} \leq \ & 4 \norm{W} \norm{ (\id - \m{E}_b)(\tau_t^{(1)}(A))}\notag \\ 
 & + 2 \norm{ \m{E}_b (\id -\m{E}_w)(\tau_t^{(1)}(A))} \norm{ [ \tau_s^{(2)}(B),W]} \notag \\
& +2 \norm{W} \sum_{m=1}^{D_W} \sum _{n\geq D_W -m} \norm{ [ \Delta_m ( \tau_t^{(1)}(A)), B(s,n)]}.
	\end{align} 
Now assumption \eqref{eq:specialLR} with Lemma \ref{lem:extension} implies
	\begin{align}\label{eq:diff6}
& 4 \norm{W} \norm{ (\id - \m{E}_b)(\tau_t^{(1)}(A))}  + 2 \norm{ \m{E}_b (\id -\m{E}_w)(\tau_t^{(1)}(A))} \norm{ [ \tau_s^{(2)}(B),W]} \notag\\ 
\leq &8C_0C_* \norm{W} g(t) e^{v |s|} \bigg{(} f( d(S_A, S_B)) + f( d(S_A,S_W)-1) e^{- \mu d(S_W, S_B)} \bigg{)}
	\end{align}
as well as
	\begin{align}\label{eq:diff7}
& 2 \norm{W} \sum_{m=1}^{D_W} \sum _{n\geq D_W -m} \norm{ [ \Delta_m ( \tau_t^{(1)}(A)), B(s,n)]} \notag  \\
\leq & \bigg{(} \frac{16 C_0 e^{ \mu}}{1 - e^{-\mu}} \bigg{)} C_* g(t)e^{v|s|} \norm{W} \sum _{m=1}^{D_W} f(   d(S_A, S_W) + m -2) e^{ - \mu (D_W - m) }.
	\end{align}
Plugging \eqref{eq:diff6} and \eqref{eq:diff7} into \eqref{eq:diff5}, the assertion follows. 
	\end{proof}
	
	\begin{corollary}\label{lem:doublecommutatorone}
	Let $A,W,B \in \A_L$  such that 
	\begin{equation}
		\begin{split}
\max S_A < \min S_W -1  \leq \max S_W < \min S_B. 
		\end{split}
	\end{equation}
Then for all $s,t\in\R$
	\begin{equation}
		\begin{split}
& \norm{ [ [W, \tau_t^{(1)}(A)], \tau_s^{(2)}(B)]}  \leq \\
\\
& \hspace{15mm} \bigg{(} 72 C_0^2 \frac{ e^{\mu ( \diam{S_W}+2)}}{1 - e^{-\mu}} \bigg{)} \norm{A}\norm{B}\norm{W} e^{ v ( |t| + |s|)} d(\min S_W-1, S_B) e^{ - \mu d(S_A, S_B)}.
		\end{split}
	\end{equation}
\end{corollary}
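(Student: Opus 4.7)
The plan is to apply Lemma \ref{lem:doublecommutatortwo} directly, using the \emph{a priori} Lieb--Robinson bound of Theorem \ref{thm:apLR} as the hypothesized commutator estimate in (\ref{eq:specialLR}), and then to simplify the resulting expression $h_\mu(S_A,S_W,S_B)$ to bring it into the advertised form.

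First, I would apply Theorem \ref{thm:apLR} to the dynamics $\tau_t^{(1)}$ generated by $\Phi_1$: for any $T\in\A_L$ with $\min S_T>\max S_A$ we have
\begin{equation*}
\norm{[\tau_t^{(1)}(A),T]}\leq C_0 \norm{A}\norm{T}(e^{v|t|}-1)e^{-\mu d(S_A,S_T)}\leq C_0\norm{A}\norm{T}e^{v|t|}e^{-\mu d(S_A,S_T)}.
\end{equation*}
Hence assumption (\ref{eq:specialLR}) holds with $k=0$, $C_* = C_0$, $g(t)=e^{v|t|}$, and $f(d)=e^{-\mu d}$, which is indeed monotone decreasing. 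The hypotheses of Lemma \ref{lem:doublecommutatortwo} on the supports are satisfied by assumption, so the lemma yields
\begin{equation*}
\norm{[[W,\tau_t^{(1)}(A)],\tau_s^{(2)}(B)]}\leq \Bigl(24 C_0^2\,\frac{e^{\mu}}{1-e^{-\mu}}\Bigr)\norm{A}\norm{B}\norm{W}\,e^{v(|t|+|s|)}\,h_\mu(S_A,S_W,S_B).
\end{equation*}

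It remains to estimate $h_\mu$ with $f(d)=e^{-\mu d}$. I would use the additivity relation $d(S_A,S_B)=d(S_A,S_W)+\diam S_W+d(S_W,S_B)$, which gives
\begin{equation*}
e^{-\mu(d(S_A,S_W)-1)}e^{-\mu d(S_W,S_B)}=e^{\mu(\diam S_W+1)}e^{-\mu d(S_A,S_B)},
\end{equation*}
and $d(S_A,S_W)+m-2+D_W-m=d(S_A,S_B)-\diam S_W + (\diam S_W+1)-2 + m - m = d(S_A,S_B)-1$ (using $D_W=d(S_W,S_B)+\diam S_W+1$), so each term of the sum contributes $e^{\mu}e^{-\mu d(S_A,S_B)}$. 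Summing, the third term equals $D_W e^{\mu} e^{-\mu d(S_A,S_B)}$, and noting (from $\max S_W<\min S_B$) that $D_W=d(\min S_W-1,S_B)\geq 1$, we obtain
\begin{equation*}
h_\mu(S_A,S_W,S_B)\leq \bigl(1+e^{\mu(\diam S_W+1)}+D_W e^{\mu}\bigr)e^{-\mu d(S_A,S_B)}\leq 3\,D_W\,e^{\mu(\diam S_W+2)}e^{-\mu d(S_A,S_B)}.
\end{equation*}
Combining this with the prefactor $24 C_0^2 e^{\mu}/(1-e^{-\mu})$ produces the constant $72 C_0^2 e^{\mu(\diam S_W+2)}/(1-e^{-\mu})$ and the geometric factor $d(\min S_W-1,S_B)\,e^{-\mu d(S_A,S_B)}$, as claimed.

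The step I expect to require the most care is the bookkeeping for $h_\mu$: correctly identifying $D_W$ as $d(\min S_W-1,S_B)$ and checking that the telescoping in the sum produces exactly the exponent $-\mu d(S_A,S_B)$ (rather than, say, a worse exponent involving $\diam S_W$) is where all the geometry is packed. Once that arithmetic is confirmed, the rest of the proof is a routine collection of constants.
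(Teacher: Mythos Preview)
Your approach is exactly the paper's: apply Lemma \ref{lem:doublecommutatortwo} with $k=0$, $C_*=C_0$, $g(t)=e^{v|t|}$, $f(d)=e^{-\mu d}$, and then simplify $h_\mu$. There is just one arithmetic slip: your estimate $(1+e^{\mu(\diam S_W+1)}+D_W e^{\mu})\leq 3\,D_W\,e^{\mu(\diam S_W+2)}$ is one factor of $e^{\mu}$ too generous; since $D_W\geq 1$ and $\diam S_W\geq 0$, each of the three terms is already $\leq D_W e^{\mu(\diam S_W+1)}$, so in fact $h_\mu\leq 3\,D_W\,e^{\mu(\diam S_W+1)}e^{-\mu d(S_A,S_B)}$, which is what the paper uses. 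Combined with the prefactor $24C_0^2 e^{\mu}/(1-e^{-\mu})$ this yields the stated constant $72 C_0^2 e^{\mu(\diam S_W+2)}/(1-e^{-\mu})$; as written, your bound on $h_\mu$ would produce an extra $e^{\mu}$ in the final constant.
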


\begin{proof}
This follows as a special case of Lemma \ref{lem:doublecommutatortwo} using the commutator bound from Theorem \ref{thm:apLR} and $k=0 $. In this case,
	\begin{equation}
		\begin{split}
h_\mu(S_A, S_W, S_B) & = e^{ - \mu d(S_A,S_B)} + e^{ - \mu ( d(S_A,S_W) + d(S_W, S_B) -1)} + \sum _{m=1}^{D_W} e^{ -\mu ( d(S_A, S_B)-1)} \\
& \leq 3 d(\min S_W - 1, S_B) e^{ \mu (\diam S_W +1)} e^{ - \mu d(S_A,S_B)}
		\end{split}
	\end{equation}
where we recall $D_W = d( \min S_W - 1, S_B)$. 
\end{proof}

\section*{Acknowledgments}
A.M. was partially supported by NSF Grant CCF-1716990 and Villum Grants No. 25452 and 10059. B.N. acknowledges support from the National Science Foundation under grant DMS-1813149. The authors thank J. Reschke and G. Stolz for helpful discussions.

\newcommand{\etalchar}[1]{$^{#1}$}


\begin{thebibliography}{BnHVC09}
\providecommand{\url}[1]{{\tt #1}}
\providecommand{\urlprefix}{URL }
\providecommand{\eprint}[2][]{e-print {#2}}

\bibitem[BnHVC09]{PhysRevLett.102.240603}
M.~C. Ba\~nuls, M.~B. Hastings, F.~Verstraete and J.~I. Cirac, Matrix product
  states for dynamical simulation of infinite chains, {\em Physical Review Letters \/}
  {\bf 102}, 240603 (2009).

\bibitem[BHV06]{PhysRevLett.97.050401}
S.~Bravyi, M.~B. Hastings and F.~Verstraete, {L}ieb-{R}obinson bounds and the
  generation of correlations and topological quantum order, {\em Physical Review Letters \/} {\bf 97}, 050401 (2006).

\bibitem[BEO09]{PhysRevA.80.052319}
C.~K. Burrell, J.~Eisert and T.~J. Osborne, Information propagation through
  quantum chains with fluctuating disorder, {\em Physical Review A \/} {\bf 80},
  052319 (2009).

\bibitem[BO07]{PhysRevLett.99.167201}
C.~K. Burrell and T.~J. Osborne, Bounds on the speed of information propagation
  in disordered quantum spin chains, {\em Physical Review Letters \/} {\bf 99}, 167201
  (2007).

\bibitem[DLLY14]{PhysRevLett.113.127202}
D.~Damanik, M.~Lemm, M.~Lukic and W.~Yessen, New anomalous {L}ieb-{R}obinson
  bounds in quasiperiodic $xy$ chains, {\em Physical Review Letters \/} {\bf 113},
  127202 (2014).

\bibitem[EKS18]{Elgart2018}
A.~Elgart, A.~Klein and G.~Stolz, Manifestations of dynamical localization in
  the disordered {XXZ} spin chain, {\em Communications in Mathematical
  Physics\/} {\bf 361}, 1083--1113 (2018).

\bibitem[EMNY20]{PhysRevA.101.022333}
D.~V. Else, F.~Machado, C.~Nayak and N.~Y. Yao, Improved {L}ieb-{R}obinson
  bound for many-body hamiltonians with power-law interactions, {\em Physical Review A \/} {\bf 101}, 022333 (2020).

\bibitem[GL16]{Gebert2016}
M.~Gebert and M.~Lemm, On polynomial {L}ieb{\textendash}{R}obinson bounds for
  the {XY} chain in a decaying random field, {\em Journal of Statistical
  Physics\/} {\bf 164}, 667--679 (2016).

\bibitem[HSS12]{Hamza2012}
E.~Hamza, R.~Sims and G.~Stolz, Dynamical localization in disordered quantum
  spin systems, {\em Communications in Mathematical Physics\/} {\bf 315},
  215--239 (2012).

\bibitem[Hoe63]{Hoeffding-prob}
W.~Hoeffding, Probability inequalities for sums of bounded random variables,
  {\em Journal of the American Statistical Association\/} {\bf 58}, 13--30
  (1963).

\bibitem[LR72]{Lieb1972}
E.~H. Lieb and D.~W. Robinson, The finite group velocity of quantum spin
  systems, {\em Communications in Mathematical Physics\/} {\bf 28}, 251--257
  (1972).

\bibitem[NOS06]{nachtergaele:2006}
B.~Nachtergaele, Y.~Ogata and R.~Sims, Propagation of correlations in quantum
  lattice systems, {\em Journal of Statistical Physics \/} {\bf 124}, 1--13 (2006).

\bibitem[NRSS09]{NRSS}
B.~Nachtergaele, H.~Raz, B.~Schlein and R.~Sims, {L}ieb-{R}obinson bounds for
  harmonic and anharmonic lattice systems, {\em Communications in Mathematical
  Physics\/} {\bf 286}, 1073--1098 (2009).

\bibitem[NSW12]{NSW}
B.~Nachtergaele, V.~B. Scholz and R.~F. Werner, Local approximation of
  observables and commutator bounds, in {\em Operator Methods in Mathematical
  Physics\/}, Springer Basel, 2012, pp. 143--149.

\bibitem[NSY19]{NSY}
B.~Nachtergaele, R.~Sims and A.~Young, Quasi-locality bounds for quantum
  lattice systems. i. {L}ieb-{R}obinson bounds, quasi-local maps, and spectral
  flow automorphisms, {\em Journal of Mathematical Physics\/} {\bf 60} (2019).

\bibitem[The14]{them:2014}
K.~Them, Towards experimental tests and applications of {L}ieb-{R}obinson
  bounds, {\em Physical Review A \/} {\bf 89}, 022126 (2014).

\bibitem[TGS{\etalchar{+}}19]{PhysRevX.9.031006}
M.~C. Tran, A.~Y. Guo, Y.~Su, J.~R. Garrison, Z.~Eldredge, M.~Foss-Feig, A.~M.
  Childs and A.~V. Gorshkov, Locality and digital quantum simulation of
  power-law interactions, {\em Physical Review X\/} {\bf 9}, 031006 (2019).

\bibitem[WH20]{PRXQuantum.1.010303}
Z.~Wang and K.~R. Hazzard, Tightening the {L}ieb-{R}obinson bound in locally
  interacting systems, {\em PRX Quantum\/} {\bf 1}, 010303 (2020).


\end{thebibliography}
\end{document}